\documentclass[11pt]{amsart}

\usepackage{amsmath,amssymb,latexsym,amscd,amsthm} 
\usepackage[numbers]{natbib}
\usepackage{graphicx}
\usepackage{fullpage}

\setlength{\tabcolsep}{0.15cm}
\newcommand{\f}{\hat{f}}
\newcommand{\eps}{\varepsilon}
\newcommand{\ind}[1]{1_{\{#1\}}}
\newcommand{\G}{\mathcal{G}}

\newtheorem{proposition}{Proposition}
\newtheorem{lemma}{Lemma}
\newtheorem{theorem}{Theorem}
\newtheorem{corollary}{Corollary}
\newtheorem{claim}{Claim}
\theoremstyle{remark}
\newtheorem{remark}{Remark}

\begin{document}

\title{The uniqueness property for networks with several origin-destination pairs}
\author{Fr\'ed\'eric Meunier \and Thomas Pradeau}
\address{Universit\'e Paris Est, CERMICS (ENPC) \\F-77455 Marne-la-Vall\'ee}
\email{frederic.meunier@enpc.fr, thomas.pradeau@enpc.fr}
 \keywords{congestion externalities; nonatomic games; ring; transportation network; uniqueness property}
 
\begin{abstract}
 We consider congestion games on networks with nonatomic users and user-specific costs. We are interested in the uniqueness property defined by Milchtaich [Milchtaich, I. 2005. Topological conditions for uniqueness of equilibrium in networks. Math. Oper. Res. 30 225-244] as the uniqueness of equilibrium flows for all assignments of strictly increasing cost functions. He settled the case with two-terminal networks. As a corollary of his result, it is possible to prove that some other networks have the uniqueness property as well by adding common fictitious origin and destination.
 In the present work, we find a necessary condition for networks with several origin-destination pairs  to have the uniqueness property in terms of excluded minors or subgraphs. As a key result, we characterize completely bidirectional rings for which the uniqueness property holds: it holds precisely for nine networks and those obtained from them by elementary operations. For other bidirectional rings, we exhibit affine cost functions yielding to two distinct equilibrium flows. Related results are also proven. For instance, we characterize networks having the uniqueness property for any choice of origin-destination pairs.

\end{abstract}

\maketitle

\section{Introduction}

In many areas, different users share a common network to travel or to exchange informations or goods. Each user wishes to select a path connecting a certain origin to a certain destination. However, the selection of paths in the network by the users induces congestion on the arcs, leading to an increase of the costs. Taking into account the choices of the other users, each user looks for a path of minimum cost. We expect therefore to reach a Nash equilibrium: each user makes the best reply to the actions chosen by the other users.

This kind of games is studied since the 50's, with the seminal works by \citet{Wa52} and \citet{Be56}. When the users are assumed to be nonatomic -- the effect of a single user is negligible -- equilibrium is known to exist~\citep{Mi00}. Moreover, when the users are affected equally by the congestion on the arcs, the costs supported by the users are the same in all equilibria \citep{AM81}. In the present paper, we are interested in the case when the users may be affected differently by the congestion. In such a case, examples are known for which these costs are not unique. Various conditions have been found that ensure nevertheless uniqueness. For instance, if the user's cost functions attached to the arcs are continuous, strictly increasing, and identical up to additive constants, then we have uniqueness of the equilibrium flows, and thus of the equilibrium costs \citep{AK01}. In 2005, continuing a work initiated by \citet{Mi00} and \citet{Ko04} for networks with parallel routes, 
\citet{Mi05} found a topological characterization of two-terminal networks for which, given any assignment of strictly increasing and continuous cost functions, the flows are the same in all equilibria. Such networks are said to enjoy the {\em uniqueness property}. Similar results with atomic users have been obtained by \citet{OrRoSh93} and \citet{RiSh07}.

The purpose of this paper is to find similar characterizations for networks with more than two terminals. We are able to characterize completely the ring networks having the uniqueness property, whatever the number of terminals is. The main result is that it holds precisely for nine networks and those obtained from them by elementary operations. For other rings, we exhibit affine cost functions yielding to two distinct equilibrium flows. It allows to describe infinite families of graphs for which the uniqueness property does not hold. For instance, there is a family of ring networks such that every network with a minor in this family does not have the uniqueness property.

\section{Preliminaries on graphs}

An {\em undirected graph} is a pair $G=(V,E)$ where $V$ is a finite set of {\em vertices} and $E$ is a family of unordered pairs of vertices called {\em edges}. 
A {\em directed graph}, or {\em digraph} for short, is a pair $D=(V,A)$ where $V$ is a finite set of {\em vertices} and $A$ is a family of ordered pairs of vertices called {\em arcs}. A {\em mixed graph} is a graph having edges and arcs. More formally, it is a triple $M=(V,E,A)$ where $V$ is a finite set of vertices, $E$ is a family of unordered pairs of vertices (edges) and $A$ is a family of ordered pairs of vertices (arcs). Given an undirected graph $G=(V,E)$, we define the {\em directed version} of $G$ as the digraph $D=(V,A)$ obtained by replacing each (undirected) edge in $E$ by two (directed) arcs, one in each direction. An arc of $G$ is understood as an arc of its directed version. In these graphs, {\em loops} -- edges or arcs having identical endpoints -- are not allowed, but pairs of vertices occuring more than once -- {\em parallel edges} or {\em parallel arcs} -- are allowed.

A {\em walk} in a directed graph $D$ is a sequence $$P=(v_0,a_1,v_1,\ldots,a_k,v_k)$$ where $k\geq 0$, $v_0,v_1,\ldots,v_k\in V$, $a_1,\ldots,a_k\in A$, and $a_i=(v_{i-1},v_i)$ for $i=1,\ldots,k$. If all $v_i$ are distinct, the walk is called a {\em path}.
If no confusion may arise, we identify sometimes a path $P$ with the set of its vertices or with the set of its arcs, allowing to use the notation $v\in P$ (resp. $a\in P$) if a vertex $v$ (resp. an arc $a$) occurs in $P$.

An undirected graph $G'=(V',E')$ is a {\em subgraph} of an undirected graph $G=(V,E)$ if $V'\subseteq V$ and $E'\subseteq E$.
An undirected graph $G'$ is a {\em minor} of an undirected graph $G$ if $G'$ is obtained by contracting edges (possibly none) of a subgraph of $G$. {\em Contracting} an edge $uv$ means deleting it and identifying both endpoints $u$ and $v$.
Two undirected graphs are {\em homeomorphic} if they arise from the same undirected graph by subdivision of edges, where a {\em subdivision} of an edge $uv$ consists in introducing a new vertex $w$ and in replacing the edge $uv$ by two new edges $uw$ and $wv$.

The same notions hold for directed graphs and for mixed graphs.

Finally, let $G=(V,E)$ be an undirected graph, and $H=(T,L)$ be a directed graph with $T\subseteq V$, then $G+H$ denotes the mixed graph $(V,E,L)$.

\section{Model}\label{sec:model}

Similarly as in the multiflow theory (see for instance \citet{S03} or \citet{KV00}), we are given a {\em supply graph} $G=(V,E)$ and a {\em demand digraph} $H=(T,L)$ with $T\subseteq V$. The graph $G$ models the (transportation) {\em network}. The arcs of  $H$ model the origin-destination pairs, also called in the sequel the {\em OD-pairs}. $H$ is therefore assumed to be simple, i.e. contains no loops and no multiple edges. 
A {\em route} is an $(o,d)$-path of the directed version of $G$ with $(o,d)\in L$ and is called an $(o,d)$-route. The set of all routes (resp. $(o,d)$-routes) is denoted by $\mathcal{R}$ (resp. $\mathcal{R}_{(o,d)}$). 

The population of {\em users} is modelled as a bounded real interval $I$ endowed with the Lebesgue measure $\lambda$, the {\em population measure}. The set $I$ is partitioned into measurable subsets $I_{(o,d)}$ with $(o,d)\in L$, modelling the users wishing to select an $(o,d)$-route.

For a given pair of supply graph and demand digraph, and a given partition of users, we define a {\em strategy profile} as a measurable mapping $\sigma:I\rightarrow\mathcal{R}$ such that $\sigma(i)\in\mathcal{R}_{(o,d)}$ for all $(o,d)\in L$ and $i\in I_{(o,d)}$. 
For each arc $a\in A$ of the directed version of $G$, the measure of the set of all users $i$ such that $a$ is in $\sigma(i)$ is the {\em flow} on $a$ in $\sigma$ and is denoted $f_a$: $$f_a=\lambda\{i\in I:\,a\in\sigma(i)\}.$$

The cost of each arc $a\in A$ for each user $i\in I$ is given by a nonnegative, continuous, and strictly increasing {\em cost function} $c_a^i:\mathbb{R}_+ \rightarrow \mathbb{R}_+$, such that $i \mapsto c_a^i(x)$ is measurable for all $a\in A$ and $x\in\mathbb{R}_+$. When the flow on $a$ is $f_a$, the cost for user $i$ of traversing $a$ is $c_a^i(f_a)$. For user $i$, the cost of a route $r$ is defined as the sum of the costs of the arcs contained in $r$. A {\em class} is a set of users having the same cost functions on all arcs, but not necessarily sharing the same OD-pair.

The game we are interested in is defined by the supply graph $G$, the demand digraph $H$, the population user set $I$ with its partition, and the cost functions $c_a^i$ for $a\in A$ and $i\in I$. If we forget the graph structure, we get a game for which we use the terminology {\em nonatomic congestion game with user-specific cost functions}, as in \citet{Mi96}.

A strategy profile is a (pure) Nash equilibrium if each route is only chosen by users for whom it is a minimal-cost route. In other words, a strategy profile $\sigma$ is a Nash equilibrium if for each pair $(o,d)\in L$ and each user $i\in I_{(o,d)}$ we have
\begin{equation*}\sum_{a\in\sigma(i)}c_a^i(f_a)=\min_{r\in\mathcal{R}_{(o,d)}}\sum_{a\in r}c_a^i(f_a).
\end{equation*}

Under the conditions stated above on the cost functions, a Nash equilibrium is always known to exist. It can be proven similarly as Theorem 3.1 in \citet{Mi00}, or as noted by \citet{Mi05}, it can be deduced from more general results (Theorem 1 of \citet{Sc70} or Theorems 1 and 2 of \citet{Ra92}). However, such an equilibrium is not necessarily unique, and even the equilibrium flows are not necessarily unique.

\section{Results}

\citet{Mi05} raised the question whether it is possible to characterize networks having the {\em uniqueness property}, i.e. networks for which flows at equilibrium are unique. A pair $(G,H)$ defined as in Section~\ref{sec:model} is said to have the {\em uniqueness property} if, for any partition of $I$ into measurable subsets $I_{(o,d)}$ with $(o,d)\in L$, and for any assignment of (strictly increasing) cost functions, the flow on each arc is the same in all equilibria.

Milchtaich found a positive answer for the two-terminal networks, i.e. when $|L|=1$. More precisely, he gave a (polynomial) characterization of a family of two-terminal undirected graphs such that, for the directed versions of this family and for any assignment of (strictly increasing) cost functions, the flow on each arc is the same in all equilibria. For two-terminal undirected graphs outside this family, he gave explicit cost functions for which equilibria with different flows on some arcs exist.

The objective of this paper is to address the uniqueness property for networks having more than two terminals. We settle the case of ring networks and find a necessary condition for general networks to have the uniqueness property in terms of excluded minors or subgraphs. \\

 In a ring network, each user has exactly two possible strategies. See Figure~\ref{fig:mixed} for an illustration of this kind of supply graph $G$, demand digraph $H$, and mixed graph $G+H$. We prove the following theorem in Section~\ref{sec:proof}.

\begin{theorem}\label{thm:main}
Assume that the supply graph $G$ is a cycle. Then, for any demand digraph $H$, the pair $(G,H)$ has the uniqueness property if and only if each arc of $G$ is contained in at most two routes.
\end{theorem}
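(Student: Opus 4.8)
The plan is to prove the two implications separately, after reformulating the problem in a way adapted to the cyclic structure. Label the vertices of the cycle $0,1,\dots,n-1$ in cyclic order, so that every OD-pair $(o_k,d_k)\in L$ admits exactly two routes: a clockwise route $r_k^+$ and a counterclockwise route $r_k^-$, which are arc-disjoint. Since all clockwise users of a given OD-pair traverse exactly the same arcs, the flow on each arc is a linear function of the aggregate quantities $x_k=\lambda\{i\in I_{(o_k,d_k)}:\sigma(i)=r_k^+\}$, so the uniqueness property concerns the arc-flow vectors induced by these clockwise/counterclockwise splits at equilibrium. In this language the arc condition is purely combinatorial: reading each clockwise route as the cyclic arc-interval it covers, it says that the clockwise intervals have covering depth at most two, and likewise for the counterclockwise intervals. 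Observe also that every route through a fixed arc traverses it in the same direction, so any three routes sharing an arc come from three distinct OD-pairs whose intervals pairwise overlap there. I would record this dictionary first, as both directions read most cleanly through it.

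For the necessity direction I would argue contrapositively: assuming some arc $a$ lies on at least three routes, I would produce affine cost functions and a partition of $I$ admitting two equilibria whose flows differ on $a$. The local configuration to exploit is the three pairwise-overlapping intervals through $a$. I would assign their three OD-pairs three user classes and tune affine costs on the few arcs separating the interval endpoints so that the resulting small equilibrium system has two solutions, one in which flow is pushed one way around the cycle and one the other, both satisfying the equilibrium inequalities. The value three is the threshold because two overlapping intervals leave only pairwise interaction, which a single scalar comparison resolves, whereas a third interval supplies the extra degree of freedom needed to decouple two distinct self-consistent flows.

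For the sufficiency direction I would take two equilibria $f$ and $g$ and show $f=g$. Since both route the same demand between each pair, the difference $h=f-g$ is a circulation on the cycle and therefore has a single constant net clockwise value $\theta$ across every edge; the goal is to force $\theta=0$ and in fact $h\equiv 0$. The engine is strict monotonicity combined with the equilibrium inequalities. The depth-at-most-two hypothesis is essential here: on each arc the flow is shared by at most two routes, so a change in one OD-pair's split alters that arc's cost in a manner that can be offset by at most one other OD-pair. Tracking these pairwise offsets around the cycle, one obtains a sign-propagation (exchange) argument in which the interactions form an acyclic, tree-like pattern rather than closing into a genuine cycle, and this is meant to rule out any consistent nonzero $h$. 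In the base cases where the routes reduce to arc-disjoint parallel paths or to a two-terminal pattern, I would instead invoke Milchtaich's characterization through a common fictitious origin and destination.

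I expect the sufficiency direction to be the main obstacle. The difficulty is that each route's cost is governed by arcs shared with other OD-pairs whose splits also vary between $f$ and $g$, so no OD-pair can be analysed in isolation; the entire role of the depth-at-most-two condition is to render this coupling pairwise and acyclic, and making the monotonicity argument close rigorously --- uniformly over all strictly increasing user-specific cost functions and all partitions --- is the delicate point. By contrast, necessity is essentially a finite construction once the three-interval configuration has been isolated.
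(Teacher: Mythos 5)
Your necessity half is the same plan as the paper's (three classes attached to the three OD-pairs sharing the arc, affine costs, two explicit profiles pushed in opposite directions around the ring), and while a blind sketch understandably omits the actual cost functions, be aware that this is where most of the labor sits: the construction must be verified uniformly over all configurations of the three overlapping routes, i.e.\ over all patterns of which sets $A_J^{\varepsilon}$ of arcs used exclusively by the pairs in $J$ are nonempty, which the paper handles with case-split cost tables occupying several pages. That part is completable. The genuine gap is in your sufficiency argument, whose central structural claim is false: covering depth at most two does \emph{not} make the pairwise interactions ``acyclic, tree-like.'' Take $n\geq 3$ OD-pairs whose clockwise routes are arcs of the circle arranged consecutively, each overlapping only its two cyclic neighbours (pair $1$ with pair $2$, \dots, pair $n$ with pair $1$): every arc lies in at most two routes, yet the interaction graph is an $n$-cycle. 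Since your exchange argument is ``meant to rule out any consistent nonzero $h$'' precisely via this acyclicity, the engine you propose cannot run, and you supply no replacement --- no quantity that strictly varies along the propagation and no termination argument. The circulation observation is correct but does no work: two equilibria can have $\theta=0$ (one pair shifting clockwise, another counterclockwise by the same mass) while differing on arcs, which is exactly the case that must be excluded.

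The paper's proof shows what is actually needed, and notably it tolerates cycles in the interaction pattern. For each user one sums the equilibrium inequality at $\sigma$ with that at $\hat\sigma$; strict monotonicity of the costs turns the sum into a signed statement about the quantities $\Delta_J=f_J^+-\hat f_J^+$ (Lemma~\ref{lem:alter}): a user of pair $\ell$ who switches routes forces either some $J\ni\ell$ with $A_J\neq\emptyset$ and $\delta(i)\Delta_J<0$, or $\Delta_J=0$ for all such $J$. The depth-two hypothesis enters at exactly one point, through the identity $\Delta_{\{\ell,\ell'\}}=\Delta_{\ell}+\Delta_{\ell'}$ (available only because every nonempty $A_J$ has $|J|\leq 2$), which upgrades the sign information to the strict inequality $|\Delta_{\ell_0}|<|\Delta_{\ell_1}|$. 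Iterating produces a chain $\ell_0,\ell_1,\ldots$ with strictly increasing $|\Delta_{\ell_k}|$; the strict increase forces the visited pairs to be pairwise distinct --- even when the interaction graph is cyclic --- and finiteness of $L$ gives the contradiction. This monotone-chain mechanism, not acyclicity, is the missing idea. Finally, your fallback of invoking Milchtaich through a fictitious common origin and destination is unavailable for the residual cases: the paper points out that, except for the smallest ring, none of these uniqueness results can be derived from the two-terminal characterization even after adding fictitious terminals.
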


Whether such a pair $(G,H)$ of supply graph and demand digraph is such that each arc in contained in at most two routes is obviously polynomially checkable, since we can test each arc one after the other. We will show that it can actually be tested by making only one round trip, in any direction, see Section~\ref{subsec:algo}. More generally, Section~\ref{sec:comb} contains a further discussion on the combinatorial structure of such a pair $(G,H)$. Especially, we prove in Section~\ref{subsec:explicit} that such a pair $(G,H)$ has the uniqueness property if and only if $G+H$ is homeomorphic to a minor of one of nine mixed graphs, see Figures~\ref{fig:min1}--\ref{fig:min4}.
Except for the smallest one, none of the uniqueness properties of these graphs can be derived from the results by Milchtaich, even by adding fictitious vertices as suggested p.235 of his article \citep{Mi05}.

\begin{figure}\begin{center} 
 \includegraphics[scale=0.9]{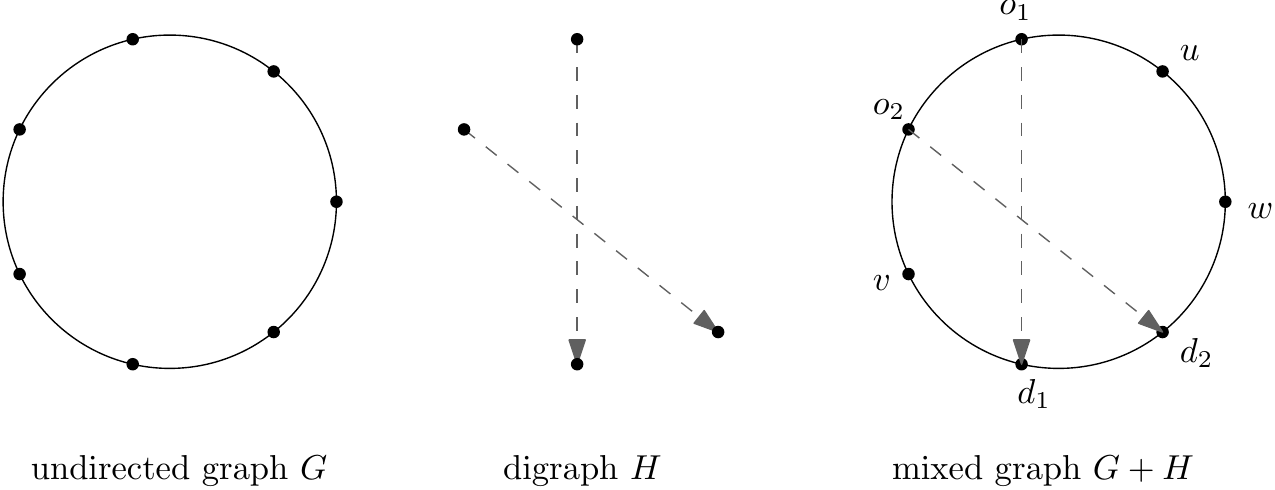}
\caption{Example of a supply graph $G$, a demand digraph $H$, and the mixed graph $G+H$. According to Theorem~\ref{thm:main}, $(G,H)$ has the uniqueness property\label{fig:mixed}}
\end{center}\end{figure} 
 
Furthermore, we find on our track a sufficient condition for congestion games with nonatomic users to have the uniqueness property when each user has exactly two available strategies (Proposition~\ref{prop:more2}).

 

%
Section~\ref{subsec:general} proves a necessary condition for general graphs to have the uniqueness property in terms of excluded minors (Corollary~\ref{cor:minor}). With the help of Theorem~\ref{thm:main}, it allows to describe infinite families of networks not having the uniqueness property. The remaining of Section~\ref{sec:disc} contains complementary results. 
For instance, Section~\ref{subsec:stronguniq} defines and studies a {\em strong uniqueness property} that may hold for general graphs independently of the demand digraph, i.e. of the OD-pairs.

\section{Proof of the characterization in case of a ring}\label{sec:proof}

\subsection{Proof strategy and some preliminary results}\label{subsec:claim}

In this section, we prove Theorem~\ref{thm:main}. The proof works in two steps. The first step, Section~\ref{subsec:uniq}, consists in proving Proposition~\ref{prop:uniq} below stating that, when each arc is contained in at most two routes, then the uniqueness property holds. The second step, Section~\ref{subsec:nonuniq}, consists in exhibiting cost functions for which flows at equilibrium are non-unique for any pair $(G,H)$ with an arc in at least three routes.

From now on, we assume that the cycle $G$ is embedded in the plane. It allows to use an orientation for $G$. Each route is now either positive or negative. The same holds for arcs of $G$: we have positive arcs and negative arcs. 
\begin{claim}\label{claim:sum}
For any $(o,d)\in L$, if $a^+$ and $a^-$ are the two arcs stemming from an edge $e\in E$, then exactly one of $a^+$ and $a^-$ is in an $(o,d)$-route.
\end{claim}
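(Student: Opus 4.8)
The plan is to exploit the very rigid structure of routes in a cycle. Since $G$ is a cycle and $o\neq d$ (recall that $H$ is simple and hence has no loops), there are exactly two $(o,d)$-paths in the directed version of $G$: one running from $o$ to $d$ following the chosen orientation, which I would call the \emph{positive} route, and one running against the orientation, the \emph{negative} route. The first thing I would record is that the positive route is composed solely of positive arcs and the negative route solely of negative arcs; this is immediate once one notes that a directed path in a cycle cannot reverse its sense of travel without repeating a vertex.

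Next I would observe that the two routes partition the edge set $E$. Concretely, labelling the vertices $v_1,\ldots,v_n$ cyclically in the positive direction and writing $o=v_j$, $d=v_k$, the positive route uses the positive arcs of the edges $v_jv_{j+1},\ldots,v_{k-1}v_k$, while the negative route uses the negative arcs of the complementary edges $v_kv_{k+1},\ldots,v_{j-1}v_j$ (indices modulo $n$). Every edge of $G$ therefore lies on exactly one of the two routes, and each route meets a given edge in at most one of its two arcs.

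With these two facts the claim follows directly. Fix an edge $e\in E$ with arcs $a^+$ (positive) and $a^-$ (negative). If $e$ lies on the positive route, then $a^+$ belongs to that route, whereas $a^-$ lies on neither route: the positive route contains only positive arcs, and the negative route does not meet $e$ at all. Symmetrically, if $e$ lies on the negative route, then precisely $a^-$ occurs in an $(o,d)$-route. In either case exactly one of $a^+,a^-$ is used, which is what is asserted.

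I do not expect any genuine obstacle here: the statement is a purely combinatorial feature of cycles, and the only point requiring minimal care is the bookkeeping showing that the two routes are arc-disjoint and edge-complementary (including the degenerate case of a cycle on two parallel edges, where the same partition argument applies verbatim). The whole content is that ``going around a ring'' can be done in exactly two ways, and these two ways traverse each edge exactly once, in opposite directions.
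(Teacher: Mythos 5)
Your proof is correct and follows the same route as the paper's: the paper's one-line argument is precisely your observation that the positive and negative $(o,d)$-routes partition the edges of the cycle, so exactly one of them contains $e$ and hence exactly one of $a^+,a^-$ is used. You have merely spelled out the bookkeeping (and the parallel-edge case) that the paper leaves implicit.
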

\begin{proof} Indeed, given an $(o,d)\in L$ and an edge $e\in E$, exactly one of the positive and negative $(o,d)$-routes contains $e$.
\end{proof}


For any subset $J\subseteq L$, we define  $A_J^+$ (resp. $A_J^-$) as the set of positive (resp. negative) arcs that are exclusively used by OD-pairs in $J$. For each  OD-pair $\ell \in L$, define $r_\ell^+$ (resp. $r_\ell^-$) to be the unique positive (resp. negative) route connecting the origin of $\ell$ to its destination. Then $a \in A_J^+$ if $a \in r_\ell^+$ for all $\ell \in J$ and $a \notin r_\ell^+$ for all $\ell \in L \setminus J$. We proceed similarly for $A_J^-$.
We define moreover $A_J=A_J^+\cup A_J^-$. In particular, $A_{\emptyset}$ is the set of arcs contained in no route. The sets $A_J$ form a partition of the set $A$ of arcs of $G$.

Defining the positive direction as the counterclockwise one on Figure~\ref{fig:mixed}, we have $$A_{\{(o_1,d_1),(o_2,d_2)\}}^+=\{(o_2,v),(v,d_1)\}$$ $$A_{\{(o_2,d_2)\}}^-=\{(o_2,o_1)\}$$ $$A_{\emptyset}=\{(d_1,v),(v,o_2),(d_2,w),(w,u),(u,o_1)\}.$$ 

The sets $A_J^{\varepsilon}$ enjoy three useful properties.

\begin{claim}\label{claim:exist} For any $\varepsilon\in\{-,+\}$ and any $\ell \in L$, there is at least one $J\subseteq L$ containing $\ell$ such that $A_J^{\varepsilon}$ is nonempty.
\end{claim}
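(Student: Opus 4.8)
The plan is to exhibit, for the fixed sign $\varepsilon$ and the fixed OD-pair $\ell \in L$, an explicit arc that witnesses the nonemptiness of some $A_J^{\varepsilon}$ with $\ell \in J$. By the symmetry between the two orientations of the cycle, I would treat only the case $\varepsilon = +$; the argument for $\varepsilon = -$ is identical after exchanging the roles of positive and negative routes and arcs.

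First I would look at the route $r_\ell^+$ itself. Writing $\ell = (o,d)$, we have $o \neq d$ because $H$ is assumed simple (no loops), so the unique positive $(o,d)$-route contains at least one arc. Moreover, every arc of $r_\ell^+$ is a positive arc: a positive route traverses the cycle in the chosen (counterclockwise) direction only, hence uses arcs of that sign exclusively. Consequently I may select a positive arc $a \in r_\ell^+$.

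Next I would read off the desired subset $J$ directly from this arc by setting $J = \{\ell' \in L : a \in r_{\ell'}^+\}$, the collection of OD-pairs whose positive route uses $a$. By construction $a \in r_{\ell'}^+$ for every $\ell' \in J$ and $a \notin r_{\ell'}^+$ for every $\ell' \in L \setminus J$, which is precisely the defining condition for $a \in A_J^+$. Since $a \in r_\ell^+$ we also have $\ell \in J$, and the membership $a \in A_J^+$ shows that $A_J^+$ is nonempty, as required.

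I do not anticipate a genuine obstacle here: the statement follows immediately from the definitions once one observes that $r_\ell^{\varepsilon}$ is a nonempty set of arcs, all of sign $\varepsilon$. The only point deserving an explicit word of justification is exactly this observation, namely that an OD-pair cannot be a loop (so the route is nonempty) and that a one-directional route on a cycle uses only arcs of the corresponding sign (so the chosen witness arc indeed lies in $A_J^+$ rather than in some $A_J^-$).
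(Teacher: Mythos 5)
Your proof is correct and follows essentially the same route as the paper, whose one-line argument likewise observes that the route $r_\ell^{\varepsilon}$ contains at least one arc $a$, which then witnesses $A_J^{\varepsilon}\neq\emptyset$ for $J=\{\ell'\in L:\,a\in r_{\ell'}^{\varepsilon}\}\ni\ell$. You merely make explicit the details the paper leaves implicit (simplicity of $H$ guaranteeing a nonempty route, and that every arc of $r_\ell^{\varepsilon}$ has sign $\varepsilon$).
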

\begin{proof}  Indeed, there is at least one arc of $G$ on the route $r_\ell^\varepsilon$. 
\end{proof} 

\begin{claim}\label{claim:comp}For any $J\subseteq L$, we have $A_J^+\neq\emptyset$ if and only if $A_{L\setminus J}^-\neq\emptyset.$
\end{claim}
\begin{proof} 
It is a consequence of Claim~\ref{claim:sum}: if $a^+\in A_J^+$, then $a^-\in A_{L \setminus J}^-$.
\end{proof} 

\begin{claim}\label{claim:sep}For any distinct $\ell$ and $\ell'$ in $L$, there is at least one $J$ such that $|\{\ell,\ell'\}\cap J|=1$ and $A_J\neq\emptyset$. 
\end{claim}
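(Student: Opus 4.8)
The plan is to argue directly, exploiting the fact that in an oriented cycle each positive route is a single contiguous sub-path whose arc set already determines its two endpoints. The key observation is that producing a separating $J$ with $A_J\neq\emptyset$ amounts to exhibiting a single arc that is used, in one fixed direction, by exactly one of $\ell$ and $\ell'$; it therefore suffices to work with the positive routes alone.

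First I would compare the two positive routes $r_\ell^+$ and $r_{\ell'}^+$. Since $H$ is simple and $\ell\neq\ell'$, the OD-pairs $\ell$ and $\ell'$ have distinct origin-destination data. Because $G$ contains no loops, the origin and destination of each OD-pair differ, so each positive route is a \emph{nonempty} contiguous arc of the oriented cycle; moreover such a route never wraps around the whole cycle, hence it is a proper contiguous sub-path. Consequently its arc set determines uniquely its first vertex (the origin) and its last vertex (the destination). Therefore $r_\ell^+=r_{\ell'}^+$ would force $o_\ell=o_{\ell'}$ and $d_\ell=d_{\ell'}$, i.e. $\ell=\ell'$; so in fact $r_\ell^+\neq r_{\ell'}^+$ as sets of positive arcs.

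Next I would pick an arc $a$ in the symmetric difference of these two arc sets, say (without loss of generality) $a\in r_\ell^+\setminus r_{\ell'}^+$. Setting $J=\{m\in L:\,a\in r_m^+\}$, the very definition of $A_J^+$ gives $a\in A_J^+$, while $\ell\in J$ and $\ell'\notin J$, so that $|\{\ell,\ell'\}\cap J|=1$. Hence $A_J\supseteq A_J^+\ni a$ is nonempty, which is exactly the assertion.

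The argument is short, so there is no genuinely hard step; the only point requiring care is the claim that a nonempty contiguous arc of the oriented cycle determines its endpoints, which relies both on the absence of loops (guaranteeing that the routes are nonempty) and on the fact that a positive route never traverses the entire cycle. I would state this explicitly rather than leave it implicit. An equivalent phrasing would be a proof by contradiction: if every separating $J$ had $A_J^+=\emptyset$, then $r_\ell^+$ and $r_{\ell'}^+$ would contain exactly the same positive arcs, again forcing $\ell=\ell'$; I prefer the constructive version above since it pinpoints the separating arc.
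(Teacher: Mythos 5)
Your proposal is correct and takes essentially the same route as the paper, which likewise observes that, since $H$ is simple, some arc of $G$ lies in exactly one of the four routes associated with $\ell$ and $\ell'$, and the set $J$ determined by that arc separates the two pairs; your version merely makes this explicit by restricting to the positive routes and noting that their arc sets determine their endpoints. One small misattribution: the fact that each OD-pair has distinct origin and destination (so each positive route is nonempty) follows from $H$ being simple, i.e.\ loopless, not from the absence of loops in $G$.
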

\begin{proof} 
Indeed, let $\ell=(o,d)$ and $\ell'=(o',d')$ be two distinct OD-pairs of $H$. Since $H$ is simple, it contains no multiple edges and $o\neq o'$ or $d\neq d'$. It means that there is at least one arc of $G$ which is in exactly one of the four $(o,d)$- and $(o',d')$-routes.
\end{proof}

\subsection{If each arc of $G$ is contained in at most two routes, the uniqueness property holds}\label{subsec:uniq}
For each user $i$, we define $r_i^+$ (resp. $r_i^-$) to be the unique positive (resp. negative) route connecting the origin of $i$ to its destination.
 For a strategy profile $\sigma$ and a subset $J \subseteq L$, we define $f_J^+$ and $f_J^-$ to be:
$$f_J^+ = \int_{i\in\bigcup_{\ell\in J}I_{\ell}}\ind{\sigma(i)=r_i^+}d\lambda \ \text{ and } \ f_J^-=\int_{i\in\bigcup_{\ell\in J}I_{\ell}}\ind{\sigma(i)=r_i^-}d\lambda.$$
The quantity $f_J^+$ (resp. $f_J^-$) is thus the number of users $i$ in a $I_{\ell}$ with $\ell\in J$ choosing a positive (resp. negative) route. Note that the quantity $f_J^++f_J^-=\sum_{\ell\in J}\lambda(I_{\ell})$ does not depend on the strategy $\sigma$.

Assume that we have two distinct equilibria $\sigma$ and $\hat{\sigma}$. The flows induced by $\hat\sigma$ are denoted with a hat: $\f$. We define for any subset $J \subseteq L$: 
\begin{equation}\label{eq:fJ}
\Delta_J=f_J^+-\f_J^+=\f_J^--f_J^- .
\end{equation}
By a slight abuse of notation, we let $\Delta_{\ell}:=\Delta_{\{\ell\}}$ for $\ell\in L$.  

For each user $i$, we define $\delta(i) = \ind{\sigma(i)=r_i^+} - \ind{\hat{\sigma}(i)=r_i^+}=\ind{\hat{\sigma}(i)=r_i^-} - \ind{\sigma(i)=r_i^-}$. Then, the following lemma holds.
\begin{lemma}\label{lem:alter}
Let $\ell\in L$ and $i\in I_{\ell}$ be such that $\delta(i)\neq 0$. Then exactly one of the following alternatives holds.
\begin{itemize}
\item There is a $J\subseteq L$ with $\ell\in J$, $A_J\neq\emptyset$, and $\delta(i)\Delta_J<0$.
\item For all $J\subseteq L$ with $\ell\in J$ and $A_J\neq\emptyset$, we have $\Delta_J=0$.
\end{itemize}
\end{lemma}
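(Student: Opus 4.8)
The plan is to extract from the two equilibrium conditions satisfied by $\sigma$ and $\hat{\sigma}$ at the switching user $i$ a single sign inequality, and then to read off the signs of its summands in terms of the quantities $\Delta_J$.

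First I would record the elementary but crucial observation that the flow on an arc depends only on the block it belongs to: if $a\in A_J^+$, then a user traverses $a$ exactly when her OD-pair lies in $J$ and she picks her positive route, so $f_a=f_J^+$ and likewise $\hat{f}_a=\f_J^+$; thus $f_a-\hat{f}_a=\Delta_J$ for $a\in A_J^+$, and symmetrically $f_a-\hat{f}_a=-\Delta_J$ for $a\in A_J^-$. Since each cost function $c_a^i$ is strictly increasing, the sign of $g_a:=c_a^i(f_a)-c_a^i(\hat{f}_a)$ equals that of $f_a-\hat{f}_a$, hence equals $\operatorname{sign}(\Delta_J)$ on the positive arcs of $A_J^+$ and $-\operatorname{sign}(\Delta_J)$ on the negative arcs of $A_J^-$.

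Next, using that $i$ chooses a minimum-cost route in both equilibria and that $\delta(i)\neq 0$ forces $\sigma(i)\neq\hat{\sigma}(i)$, I would write the two optimality inequalities (chosen route no more costly than the other) under $\sigma$ and under $\hat{\sigma}$, and combine them into the single relation $\delta(i)\bigl[\sum_{a\in r_i^+}g_a-\sum_{a\in r_i^-}g_a\bigr]\le 0$. The point is that every positive arc $a\in r_i^+$ lies in some $A_J^+$ with $J\ni\ell$ and $A_J\neq\emptyset$, every negative arc $a\in r_i^-$ lies in some $A_J^-$ with $J\ni\ell$ and $A_J\neq\emptyset$, and conversely every $J\ni\ell$ with $A_J\neq\emptyset$ arises this way, since any arc of $A_J^+$ (resp. $A_J^-$) lies on $r_\ell^+$ (resp. $r_\ell^-$). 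By the sign computation above, each summand $\delta(i)g_a$ for $a\in r_i^+$ and each summand $-\delta(i)g_a$ for $a\in r_i^-$ carries the sign of $\delta(i)\Delta_J$ for its associated $J$.

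It then remains to deduce the dichotomy. The two alternatives are clearly incompatible: a set $J$ witnessing the first satisfies $\delta(i)\Delta_J<0$, hence $\Delta_J\neq 0$, contradicting the second. For exhaustiveness I would argue by contraposition: if the first alternative fails, then $\delta(i)\Delta_J\ge 0$ for every $J\ni\ell$ with $A_J\neq\emptyset$, so by the previous paragraph every summand of $\delta(i)\bigl[\sum_{a\in r_i^+}g_a-\sum_{a\in r_i^-}g_a\bigr]$ is nonnegative; combined with the inequality $\le 0$, this forces each summand, hence each $g_a$, to vanish. Therefore $\Delta_J=0$ for every $J$ associated to an arc of $r_i^+$ or $r_i^-$, and since these $J$ exhaust all $J\ni\ell$ with $A_J\neq\emptyset$, the second alternative holds. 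The step to get right is precisely this last bookkeeping: tracking the signs through the subtraction of the two route-costs, and checking that the arc-to-block correspondence hits every relevant $J$, so that ``all $g_a=0$'' genuinely yields the universally quantified second alternative rather than a weaker conclusion. Note that the argument nowhere uses the hypothesis that each arc lies in at most two routes; that assumption enters only afterwards, when Lemma~\ref{lem:alter} is combined with the structure of the sets $A_J$ to derive the uniqueness property itself.
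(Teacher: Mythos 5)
Your proof is correct and follows essentially the same route as the paper's: you sum the two equilibrium inequalities for user $i$ into $\delta(i)\bigl[\sum_{a\in r_i^+}g_a-\sum_{a\in r_i^-}g_a\bigr]\le 0$ (the paper's Equation~\eqref{eq:twoeq}, grouped by the sets $A_J^{\eps}$), use strict monotonicity of the $c_a^i$ to give each summand the sign of $\delta(i)\Delta_J$, and conclude the dichotomy. The only difference is expository: you make explicit the arc-to-block correspondence (that the blocks met by $r_i^+\cup r_i^-$ are exactly the $J\ni\ell$ with $A_J\neq\emptyset$) and the exhaustiveness argument, both of which the paper leaves implicit.
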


We briefly explain the intuition behind this lemma. Assume that we move from $\sigma$ to $\hat\sigma$. If a user $i$ changes his chosen route, we are in one of the following two situations. 

The first situation is when the cost of the new route decreases or the cost of the old route increases. If the cost of a route decreases (resp. increases), there is at least one arc of this route whose flow decreases (resp. increases). Since an arc belongs to some set $A_J^\eps$, we get the first point of Lemma~\ref{lem:alter}. 

The second situation is when the costs remain the same for both routes and both routes have same costs, which implies the second point of Lemma~\ref{lem:alter}. 

\begin{proof}[Proof of Lemma~\ref{lem:alter}]
As $\sigma$ is an equilibrium, we have for each user $i$:
\begin{equation}\label{eq:equi}\sum_{a\in A} c_a^i(f_a) \left( \ind{a \in \sigma(i)} - \ind{a \in \hat{\sigma}(i)} \right) \leq 0.\end{equation}
For $a \in A_J^+$, we have $f_a=f_J^+$ and $\ind{a \in \sigma(i)}=\ind{\sigma(i)=r_i^+}\ind{a \in r_i^+}$, and the same holds for $a \in A_J^-$. By decomposing the sum~\eqref{eq:equi}, we obtain that
$$\sum_{J\subseteq L}\left(\sum_{a \in A^+_J \cap r_i^+}c_a^i(f_J^+)\delta(i)-\sum_{a \in A^-_J \cap r_i^-}c_a^i(f_J^-)\delta(i)\right)\leq 0.$$
We can write a similar equation for the equilibrium $\hat{\sigma}$. By summing them, we obtain

\begin{equation}\label{eq:twoeq}
\delta(i) \sum_{J\subseteq L}  \left(\sum_{a \in A^+_J\cap r_i^+} (c_a^i(f_J^+)-c_a^i(\f_J^+))- \sum_{a \in A^-_J \cap r_i^-}(c_a^i(f_J^-)-c_a^i(\f_J^-))\right)\leq 0.
\end{equation}

According to Equation~\eqref{eq:fJ} and using the fact that the maps $c_a^i$ are strictly increasing, both
$\sum_{a \in A^+_J \cap r_i^+}(c_a^i(f_J^+)-c_a^i(\f_J^+))$ and $-\sum_{a \in A^-_J \cap r_i^-}(c_a^i(f_J^-)-c_a^i(\f_J^-))$ have the sign of $\Delta_J$. Therefore, if all terms of the sum in Equation~\eqref{eq:twoeq} are equal to $0$, the second point of the lemma holds. If at least one term of the sum is $<0$, we get the first point.
\end{proof} 

With the help of this lemma, we get one direction of Theorem~\ref{thm:main}.

\begin{proposition}\label{prop:uniq}
If each arc of $G$ is contained in at most two routes, the uniqueness property holds.
\end{proposition}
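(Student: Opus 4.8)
The plan is to show directly that any two equilibria $\sigma$ and $\hat\sigma$ induce the same flow on every arc, rather than arguing by contradiction. Since $f_a=f_J^\eps$ whenever $a\in A_J^\eps$, and since \eqref{eq:fJ} gives $\Delta_J=f_J^+-\f_J^+=\f_J^--f_J^-$, the equalities $f_J^+=\f_J^+$ and $f_J^-=\f_J^-$ both hold exactly when $\Delta_J=0$. Hence it suffices to prove that $\Delta_J=0$ for every $J$ with $A_J\neq\emptyset$. I would use throughout the additivity $\Delta_J=\sum_{\ell\in J}\Delta_\ell$, which follows from $f_J^\eps=\sum_{\ell\in J}f_{\{\ell\}}^\eps$. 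The hypothesis enters here: an arc of $A_J^+$ lies on exactly the positive routes $r_\ell^+$ with $\ell\in J$, hence on $|J|$ routes; so $A_J\neq\emptyset$ forces $|J|\le 2$. Thus every relevant $J$ is a singleton or a pair, and the task reduces to a sign-propagation problem for the real numbers $\Delta_\ell$.

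If all $\Delta_\ell$ vanish we are done, so I would assume the distinct positive values taken by $|\Delta_\ell|$ are $M_1>M_2>\cdots>M_k>0$. The heart of the proof is the following rigidity statement, established by induction on $j$: for every $\ell$ with $|\Delta_\ell|=M_j$, the \emph{second} alternative of Lemma~\ref{lem:alter} holds (applied to a user $i\in I_\ell$ with $\delta(i)=\operatorname{sign}(\Delta_\ell)$, which exists because $\Delta_\ell=\int_{I_\ell}\delta(i)\,d\lambda(i)$ is nonzero and thus $\delta$ equals $\operatorname{sign}(\Delta_\ell)$ on a subset of $I_\ell$ of positive measure). Reading off $\Delta_J=0$ for all relevant $J\ni\ell$, this yields $A_{\{\ell\}}=\emptyset$ and $\Delta_{\ell'}=-\Delta_\ell$ for every $\ell'$ with $A_{\{\ell,\ell'\}}\neq\emptyset$.

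To run the inductive step, fix $\ell$ with, say, $\Delta_\ell=M_j>0$ (the negative case being symmetric) and suppose toward a contradiction that the \emph{first} alternative of Lemma~\ref{lem:alter} held: there would be a relevant $J\ni\ell$ with $\Delta_J<0$. As $|J|\le 2$ and $\Delta_\ell>0$, necessarily $J=\{\ell,\ell'\}$ with $\Delta_{\ell'}<-M_j$; being negative of modulus exceeding $M_j$, this gives $\Delta_{\ell'}=-M_{j'}$ for some $j'<j$. The induction hypothesis applied to $\ell'$ then forces every relevant partner of $\ell'$ — in particular $\ell$, since $A_{\{\ell,\ell'\}}\neq\emptyset$ — to have $\Delta$-value $-\Delta_{\ell'}=M_{j'}\neq M_j$, a contradiction. (For the base case $j=1$ the contradiction is immediate, as $\Delta_{\ell'}<-M_1$ is incompatible with $|\Delta_{\ell'}|\le M_1$, so no induction hypothesis is needed.) Hence the second alternative holds, completing the step.

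Finally I would assemble the conclusion. Let $J$ be any set with $A_J\neq\emptyset$, so $|J|\le 2$. If $J=\{\ell\}$, then $\Delta_\ell\neq 0$ would contradict $A_{\{\ell\}}=\emptyset$ from the rigidity statement, whence $\Delta_\ell=0$. If $J=\{\ell,\ell'\}$, then either both $\Delta_\ell,\Delta_{\ell'}$ vanish or, if one is nonzero, the rigidity statement gives $\Delta_{\ell'}=-\Delta_\ell$; either way $\Delta_J=\Delta_\ell+\Delta_{\ell'}=0$. Thus $\Delta_J=0$ for every relevant $J$, the two equilibria agree on every arc, and the uniqueness property holds. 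The step I expect to be most delicate is exactly the induction: a single extremal argument controls only the OD-pairs of maximal $|\Delta_\ell|$, since Lemma~\ref{lem:alter}'s first alternative can genuinely fire at intermediate magnitudes; layering by $|\Delta_\ell|$ and using the rigidity already secured at larger magnitudes to exclude that alternative at smaller ones is what makes the argument close.
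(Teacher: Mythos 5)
Your proof is correct and takes essentially the same route as the paper's: the same decomposition into the sets $A_J$ and quantities $\Delta_J$, the same reduction via Lemma~\ref{lem:alter}, and the same key step that the first alternative forces a partner $\ell'$ with $A_{\{\ell,\ell'\}}\neq\emptyset$ and $|\Delta_{\ell'}|>|\Delta_{\ell}|$. The only difference is organizational: where the paper iterates this step to build an infinite sequence $\ell_0,\ell_1,\ldots$ with strictly increasing $|\Delta_{\ell_k}|$, contradicting the finiteness of $L$, you package the identical escalation as a downward induction on the finitely many distinct values $M_1>\cdots>M_k$ of $|\Delta_\ell|$.
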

\begin{proof}
Note that the assumption of the proposition ensures that $A_J=\emptyset$ if $|J|\geq 3$. We want to prove that $\Delta_J=0$ for all $J\subseteq L$ such that $A_J\neq\emptyset$.

Assume for a contradiction that there is a $J_0$ such that $\Delta_{J_0}\neq 0$ and $A_{J_0}\neq\emptyset$. Then there is a $\ell_0 \in {J_0}$ such that $\Delta_{\ell_0} \neq 0$. At least one user $i_0\in I_{\ell_0}$ is such that $\delta(i_0)\Delta_{\ell_0}>0$. 

Suppose that the first case of Lemma~\ref{lem:alter} occurs. There exists $\ell_1 \in L$, $\ell_1 \neq \ell_0$ with $A_{\{\ell_0,\ell_1\}} \neq \emptyset$ and $\delta(i_0)\Delta_{\{\ell_0,\ell_1\}}<0$. Then, $\delta(i_0)\Delta_{\{\ell_0,\ell_1\}}=\delta(i_0)(\Delta_{\ell_0}+\Delta_{\ell_1})<0$, which implies that $|\Delta_{\ell_0}|<|\Delta_{\ell_1}|$. 
It follows that $\Delta_{\ell_1}\neq 0$, and taking $i_1 \in I_{\ell_1}$ with $\delta(i_1)\Delta_{\ell_1} >0$, only the first case of Lemma~\ref{lem:alter} can occur for $i=i_1$ and $\ell=\ell_1$. Indeed, the second case would imply that $\Delta_{\{\ell_0,\ell_1\}}=0$ since $A_{\{\ell_0,\ell_1\}} \neq \emptyset$.
Repeating the same argument, we build an infinite sequence $(\ell_0,\ell_1, \ldots)$ of elements of $L$ such that, for each $k\geq 0$, $A_{\{\ell_k,\ell_{k+1}\}}\neq \emptyset$ and $|\Delta_{\ell_k}|<|\Delta_{\ell_{k+1}}|$. This last condition implies that the $\ell_k$ are distinct, which is impossible since $|L|$ is finite.

Thus, the second case of Lemma~\ref{lem:alter} occurs for $\ell_0$, and 
hence $\Delta_{J_0}=0$, which is in contradiction with the starting assumption. On any arc, we have a total flow that remains the same when changing from $\sigma$ to $\hat\sigma$.
\end{proof}

 The only fact we use from the ring structure is that there are two sets $A^+$ (positive arcs) and $A^-$ (negative arcs) and that each user has exactly two possible strategies, each of them being included in one of these two sets. We can state a result holding for more general nonatomic congestion game with user-specific cost functions. We omit the proof since the one of Proposition~\ref{prop:uniq} holds without any change.

\begin{proposition}\label{prop:more2}
Consider a nonatomic congestion game with user-specific (strictly increasing) cost functions. Let $A^+$ and $A^-$ be two disjoint finite sets. Assume that every user $i$ has exactly two available strategies $r_i^+$ and $r_i^-$ with $r_i^+\subseteq A^+$ and $r_i^-\subseteq A^-$. Then, if all triples of pairwise distinct strategies have an empty intersection, the uniqueness property holds. 
\end{proposition}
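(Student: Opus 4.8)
The plan is to transpose, almost verbatim, the machinery built for Proposition~\ref{prop:uniq}, after observing that nothing in Section~\ref{subsec:uniq} really uses the fact that $G$ is a cycle: the only structural inputs are the splitting of the arcs into two disjoint families $A^+$ and $A^-$, and the fact that each user has exactly two available strategies, one included in $A^+$ and one in $A^-$. Concretely, I would first introduce a finite index set $L$ playing the role of the OD-pairs, namely the set of distinct pairs of strategies $(r^+,r^-)$ occurring in the game, partition $I$ into the measurable classes $I_\ell$ of users sharing a given pair $\ell=(r_\ell^+,r_\ell^-)$, and then copy the definitions of $A_J^+$, $A_J^-$, $A_J$, of the quantities $f_J^+$, $f_J^-$, of $\Delta_J$ (with $\Delta_\ell:=\Delta_{\{\ell\}}$), and of $\delta(i)$. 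These definitions only refer to $A^+$, $A^-$ and to the two strategies of each user, so they make sense in the abstract game, and by construction the families $(A_J^+)_J$ and $(A_J^-)_J$ partition $A^+$ and $A^-$ respectively.

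Next I would check that Lemma~\ref{lem:alter} goes through word for word. Its proof uses only: the equilibrium inequality~\eqref{eq:equi}, which holds for any nonatomic congestion game with user-specific costs; the identities $f_a=f_J^+$ and $\ind{a\in\sigma(i)}=\ind{\sigma(i)=r_i^+}\ind{a\in r_i^+}$ for $a\in A_J^+$, together with their negative counterparts, which hold precisely because $r_i^+\subseteq A^+$ and $r_i^-\subseteq A^-$ are disjoint; the additivity $\Delta_{J_1\cup J_2}=\Delta_{J_1}+\Delta_{J_2}$ for disjoint $J_1,J_2$, which is immediate from the definition of $f_J^\pm$ as an integral over $\bigcup_{\ell\in J}I_\ell$; and the strict monotonicity of the cost functions, which forces both $\sum_{a\in A_J^+\cap r_i^+}(c_a^i(f_J^+)-c_a^i(\f_J^+))$ and $-\sum_{a\in A_J^-\cap r_i^-}(c_a^i(f_J^-)-c_a^i(\f_J^-))$ to carry the sign of $\Delta_J$. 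None of these ingredients involves the ring, so Lemma~\ref{lem:alter} holds unchanged in the present setting.

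It then remains to translate the hypothesis and to rerun the contradiction argument. The key reduction is that the assumption that every triple of pairwise distinct strategies has empty intersection yields $A_J=\emptyset$ whenever $|J|\geq 3$: if some $a\in A^+$ belonged to $A_J^+$ with $|J|\geq 3$, then $a$ would lie in three pairwise distinct members of $A^+$ (the positive strategies of three distinct classes of $J$), contradicting the hypothesis, and symmetrically for $A^-$. Granting this, the proof of Proposition~\ref{prop:uniq} applies as is: assuming for contradiction that $\Delta_{J_0}\neq 0$ with $A_{J_0}\neq\emptyset$, one extracts $\ell_0\in J_0$ with $\Delta_{\ell_0}\neq 0$ and $i_0\in I_{\ell_0}$ with $\delta(i_0)\Delta_{\ell_0}>0$; the first alternative of Lemma~\ref{lem:alter} would produce $\ell_1\neq\ell_0$ with $A_{\{\ell_0,\ell_1\}}\neq\emptyset$ and $|\Delta_{\ell_0}|<|\Delta_{\ell_1}|$, the bound $|J|\leq 2$ being exactly what guarantees that $\Delta_{\{\ell_0,\ell_1\}}=\Delta_{\ell_0}+\Delta_{\ell_1}$ has a single extra summand; this spawns an infinite sequence of pairwise distinct indices with strictly increasing $|\Delta_{\ell_k}|$, impossible in the finite set $L$. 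Hence the second alternative holds, giving $\Delta_{J_0}=0$, a contradiction. Since $\Delta_J=0$ on every $J$ with $A_J\neq\emptyset$, the total flow on each arc coincides in $\sigma$ and $\hat\sigma$, which is the uniqueness property.

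I expect the main obstacle to be precisely the reduction of the previous paragraph, that is, checking that the combinatorial hypothesis really delivers the arc-level condition $A_J=\emptyset$ for $|J|\geq 3$ on which the binary descent rests. The delicate point is that distinct classes of $L$ may a priori share one of their two strategies (identical positive strategy but different negative one, say), so that $|J|$ could exceed the number of pairwise distinct strategies passing through a given arc; one must verify that such coincidences never make $A_J$ nonempty for $|J|\geq 3$, or else dispatch them separately. Apart from isolating exactly which features of the cycle are used and confirming this translation, every remaining step is a mechanical transcription of Section~\ref{subsec:uniq}, which is why the statement can be asserted with essentially the same proof.
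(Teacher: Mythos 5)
Your proposal is precisely the paper's own argument: the authors do not write out a proof of Proposition~\ref{prop:more2} at all, stating only that the proof of Proposition~\ref{prop:uniq} ``holds without any change'', and your transcription is faithful on every step that the paper's proof actually spells out. Your audit of Lemma~\ref{lem:alter} is correct: its proof uses only the equilibrium inequality~\eqref{eq:equi}, the identities $f_a=f_J^+$ and $\ind{a\in\sigma(i)}=\ind{\sigma(i)=r_i^+}\ind{a\in r_i^+}$ for $a\in A_J^+$ (which need exactly the disjointness $A^+\cap A^-=\emptyset$ and $r_i^+\subseteq A^+$, $r_i^-\subseteq A^-$), the additivity of $\Delta$ over disjoint index sets, and strict monotonicity of the costs; none of this uses the cycle.

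However, the caveat you flag in your last paragraph is a genuine gap, and your proposal does not close it (nor does the paper, which silently glosses over it). With $L$ indexed by distinct strategy pairs, the hypothesis does \emph{not} yield $A_J=\emptyset$ for $|J|\geq 3$: take three classes $\ell_j=(r^+,s_j)$, $j=1,2,3$, sharing the positive strategy $r^+$ but with pairwise distinct negative strategies $s_1,s_2,s_3$. Every triple of pairwise distinct strategies then has empty intersection (only one distinct positive strategy passes through any arc of $r^+$), yet any arc of $r^+$ used by no other positive strategy lies in $A^+_{\{\ell_1,\ell_2,\ell_3\}}$. In such a configuration the binary descent of Proposition~\ref{prop:uniq} breaks exactly where you feared: the first alternative of Lemma~\ref{lem:alter} may produce $J=\{\ell_0,\ell_1,\ell_2\}$ with $\delta(i_0)\Delta_J<0$, from which one only gets $|\Delta_{\ell_1}+\Delta_{\ell_2}|>|\Delta_{\ell_0}|$, and cancellation (say $\Delta_{\ell_0}=1$, $\Delta_{\ell_1}=\Delta_{\ell_2}=-0.6$) prevents extracting a single successor $\ell_1$ with $|\Delta_{\ell_1}|>|\Delta_{\ell_0}|$, which is the engine of the finiteness contradiction. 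So the verbatim transcription proves the statement only under the extra assumption that $\ell\mapsto r_\ell^+$ and $\ell\mapsto r_\ell^-$ are injective on $L$, i.e.\ that no arc lies in three strategies \emph{counted with class multiplicity}. This is automatic in the ring application (since $H$ is simple, distinct OD-pairs have distinct positive routes and distinct negative routes), so Theorem~\ref{thm:main} is unaffected; but for the abstract proposition as stated you must either read the hypothesis in that stronger multiset sense or dispatch the shared-strategy configurations by a separate argument (the example above does satisfy uniqueness, by a direct monotonicity argument on the shared strategy, but that argument is not a transcription of Section~\ref{subsec:uniq}). Identifying this mismatch is more scrutiny than the paper itself applies; resolving it is what your proposal still owes.
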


\subsection{If an arc of $G$ is contained in at least three routes, a counterexample exists}\label{subsec:nonuniq}

We give an explicit construction of multiple equilibrium flows when an arc is contained in at least three routes.

\subsubsection{If $|L|=3$}\label{sec:construction} In order to ease the notation, we use $1$, $2$, and $3$ to denote the three OD-pairs of $H$. We denote accordingly by $I_1$, $I_2$, and $I_3$ the three sets of users associated to each of these OD-pairs.

We can assume without loss of generality that $A_{\{1,2,3\}}^+\neq\emptyset$, $A_{\{1,2\}}\neq\emptyset$, and $A_{\{1,3\}}\neq\emptyset$.
The first assumption can be done since there is an arc in three routes. For the other ones: with the help of Claim~\ref{claim:sep}, and if necessary of Claim~\ref{claim:comp}, we get that there is at least a $J$ of cardinality two such that $A_J\neq\emptyset$. Again, using Claim~\ref{claim:sep}, this time with the two elements of $J$, and if necessary Claim~\ref{claim:comp}, we get another $J'$ of cardinality two such that $A_{J'}\neq\emptyset$.

\paragraph{Definition of the cost functions}
We define three classes of users. Each of these classes is attached to one of the OD-pairs. For a class $k\in\{1,2,3\}$, we define the cost functions $c_J^{k,\eps}$, for all $J\subseteq\{1,2,3\}$ and $\eps\in\{-,+\}$. The cost function for a class $k$ user $i$ on an arc $a$ of $A_J^{\eps}$ is set to $c_a^i:=c_J^{k,\eps}$. If the set $A_J^{\eps}$ is empty, the definition of $c_J^{k,\eps}$ is simply discarded.

\begin{description}
\item[\textbf{Class $1$}] We define this class to be the users of the set $I_1$. We set $\lambda(I_1)=1.5$ and choose $J_1\subseteq\{1,2,3\}$ with $1\in J_1$ such that $A_{J_1}^-\neq\emptyset$ (with the help of Claim~\ref{claim:exist}).

$$\left\{
\begin{array}{l}
\displaystyle{c_{\{1,2,3\}}^{1,+}(x)=\frac{24x+7}{|A_{\{1,2,3\}}^+|}} \\
\displaystyle{c_J^{1,+}(x)=\frac{x}{|A_J^+|}\quad\mbox{for any $J\neq\{1,2,3\}$ with $1\in J$}} \\
\displaystyle{c_{J_1}^{1,-}(x)=\frac{x+48}{|A_{J_1}^-|}} \\
\displaystyle{c_J^{1,-}(x)=\frac{x}{|A_J^-|}\quad\mbox{for any $J\neq J_1$ with $1\in J$.}}
\end{array}\right.$$

\bigskip

\item[\textbf{Class $2$}] We define this class to be the users of the set $I_2$. We set $\lambda(I_2)=1$. We have assumed that $A_{\{1,2\}}\neq\emptyset$. We distinguish hereafter the cases $A_{\{1,2\}}^+\neq\emptyset$ and $A_{\{1,2\}}^-\neq\emptyset$ (which may hold simultaneously, in which case we make an arbitrary choice).

\begin{description} 
\item[\textbf{If $A_{\{1,2\}}^+\neq\emptyset$}] We choose $J_2\subseteq\{1,2,3\}$ with $2\in J_2$ such that $A_{J_2}^-\neq\emptyset$ (with the help of Claim~\ref{claim:exist}).
$$\left\{
\begin{array}{l}
\displaystyle{c_{\{1,2\}}^{2,+}(x)=\frac{25x}{|A_{\{1,2\}}^+|}} \\
\displaystyle{c_J^{2,+}(x)=\frac{x}{|A_J^+|}\quad\mbox{for any $J\neq\{1,2\}$ with $2\in J$}} \\
\displaystyle{c_{J_2}^{2,-}(x)=\frac{x+31}{|A_{J_2}^-|}} \\
\displaystyle{c_J^{2,-}(x)=\frac{x}{|A_J^-|}\quad\mbox{for any $J\neq J_2$ with $2\in J$.}}
\end{array}\right.$$

\item[\textbf{If $A_{\{1,2\}}^-\neq\emptyset$}]
 $$\left\{
 \begin{array}{l}
 \displaystyle{c_{\{1,2,3\}}^{2,+}(x)=\frac{x+26}{|A_{\{1,2,3\}}^+|}} \\
 \displaystyle{c_J^{2,+}(x)=\frac{x}{|A_J^+|}\quad\mbox{for any $J\neq\{1,2,3\}$ with $2\in J$}} \\
 \displaystyle{c_{\{1,2\}}^{2,-}(x)=\frac{22x}{|A_{\{1,2\}}^-|}} \\
 \displaystyle{c_{J}^{2,-}(x)=\frac{x}{|A_J^-|}\quad\mbox{for any $J\neq\{1,2\}$ with $2\in J$.}}
 \end{array}\right.$$
\end{description}

\bigskip

\item[\textbf{Class $3$}] We define this class to be the users of the set $I_3$. We set $\lambda(I_3)=1$. We have assumed that $A_{\{1,3\}}\neq\emptyset$. We distinguish hereafter the cases $A_{\{1,3\}}^+\neq\emptyset$ and $A_{\{1,3\}}^-\neq\emptyset$ (which may hold simultaneously, in which case we make an arbitrary choice).

\begin{description} 
\item[\textbf{If $A_{\{1,3\}}^+\neq\emptyset$}] We choose $J_3\subseteq\{1,2,3\}$ with $3\in J_3$ such that $A_{J_3}^-\neq\emptyset$ (with the help of Claim~\ref{claim:exist}).
$$\left\{
\begin{array}{l}
\displaystyle{c_{\{1,3\}}^{3,+}(x)=\frac{25x}{|A_{\{1,3\}}^+|}} \\
\displaystyle{c_J^{3,+}(x)=\frac{x}{|A_J^+|}\quad\mbox{for any $J\neq\{1,3\}$ with $3\in J$}} \\
\displaystyle{c_{J_3}^{3,-}(x)=\frac{x+31}{|A_{J_3}^-|}} \\
\displaystyle{c_{J}^{3,-}(x)=\frac{x}{|A_J^-|}\quad\mbox{for any $J\neq J_3$ with $3\in J$.}}
\end{array}\right.$$

\item[\textbf{If $A_{\{1,3\}}^-\neq\emptyset$}]
$$\left\{
\begin{array}{l}
\displaystyle{c_{\{1,2,3\}}^{3,+}(x)=\frac{x+26}{|A_{\{1,2,3\}}^+|}} \\
\displaystyle{c_J^{3,+}(x)=\frac{x}{|A_J^+|}\quad\mbox{for any $J\neq\{1,2,3\}$ with $3\in J$}} \\
\displaystyle{c_{\{1,3\}}^{3,-}(x)=\frac{22x}{|A_{\{1,3\}}^-|}} \\
\displaystyle{c_{J}^{3,-}(x)=\frac{x}{|A_J^-|}\quad\mbox{for any $J\neq\{1,3\}$ with $3\in J$.}}
\end{array}\right.$$
\end{description}

\end{description}

\bigskip

\paragraph{Definition of two strategy profiles}
We define now two strategy profiles $\sigma$ and $\hat\sigma$, inducing distinct flows on some arcs. We check in the next paragraph that each of them is an equilibrium.

\medskip

\begin{description}
\item[Strategy profile $\sigma$] For all $i\in I_1$, we set $\sigma(i)=r_i^+$ and for all $i\in I_2\cup I_3$, we set $\sigma(i)=r_i^-$. Then, the flows are the following:
$$
\begin{array}{c|ccccccc}
\hline\noalign{\smallskip}
   J & \{1\} & \{2\} & \{3\} & \{1,2\} & \{1,3\} & \{2,3\} & \{1,2,3\} \\ 
      \noalign{\smallskip}\hline\noalign{\smallskip}
f_J^+ & 1.5 & 0& 0& 1.5 & 1.5 & 0 & 1.5  \\
f_J^- & 0 & 1& 1& 1 & 1 & 2 & 2\\
\noalign{\smallskip}\hline
\end{array}
$$

\item[Strategy profile $\hat\sigma$] For all $i\in I_1$, we set $\hat\sigma(i)=r_i^-$ and for all $i\in I_2\cup I_3$, we set $\hat\sigma(i)=r_i^+$. Then, the flows are the following:
$$
\begin{array}{c|ccccccc}
\hline\noalign{\smallskip}
   J & \{1\} & \{2\} & \{3\} & \{1,2\} & \{1,3\} & \{2,3\} & \{1,2,3\} \\ 
   \noalign{\smallskip}\hline\noalign{\smallskip}
\f_J^+ & 0 & 1& 1& 1 & 1 & 2 & 2 \\
\f_J^- & 1.5 & 0& 0& 1.5 & 1.5 & 0 & 1.5 \\
\noalign{\smallskip}\hline
\end{array}
$$
\end{description}

\paragraph{The strategy profiles are equilibria}
We check now that $\sigma$ and $\hat{\sigma}$ are equilibria, by computing the cost of each of the two possible routes for each class.

For a class $k \in \{1,2,3\}$, we denote with a slight abuse of notation the common positive (resp. negative) route of the class $k$ users by $r_k^+$ (resp. $r_k^-$).
 \begin{description}
  \item[\textbf{Class $1$}] 
We put in the following tables, the costs experienced by the class $1$ users on the various arcs of $G$ for each of $\sigma$ and $\hat\sigma$. For a given $J\subseteq\{1,2,3\}$ with $1\in J$ and $\varepsilon\in\{-,+\}$, we indicate the cost experienced by any class $1$ user on the whole collection of arcs in $A_J^{\varepsilon}$. For instance in $\sigma$, if $J=\{1,2,3\}$, then $f_J^+ = 1.5$, and the cost of all arcs together in $A_{J}^+$ is $|A_{J}^+| c_J^{1,+}(1.5)=43$. 

\medskip

 For the strategy profile $\sigma$, we get the following flows and costs on the arcs of $G$ for a class $1$ user.\\

\begin{center}
\begin{tabular}{c|cc|cc}
\multicolumn{1}{c}{} & \multicolumn{2}{c}{$\eps=+$} & \multicolumn{2}{c}{$\eps=-$} \\
\hline\noalign{\smallskip}
$J$ with $1\in J$	&  \{1,2,3\} 	& other & $J_1$ & other\\ 
\noalign{\smallskip}\hline\noalign{\smallskip}
$f_J^{\eps}$		&  1.5 		&  1.5	&  0, 1, or 2	& 0, 1, or 2	 \\
Cost on $A_J^{\eps}$	& 43 		& 1.5	& 48, 49, or 50 & 0, 1, or 2 \\
\noalign{\smallskip}\hline
\end{tabular}
\end{center}
\medskip
Using the fact that $A_{\{1,2,3\}}^+\neq\emptyset$, the total cost of $r_1^+$ in $\sigma$ for a class $1$ user is equal to 
$$43+1.5\times \left|\{\mbox{$J\neq\{1,2,3\}$ such that $A_J^+\neq\emptyset$ and $1\in J$}\}\right|.$$ Since there are at most three sets $J\neq\{1,2,3\}$ such that $A_J^+\neq\emptyset$ and $1\in J$, we get that the total cost of $r_1^+$ lies in $[43;47.5]$. Similarly, using the fact that $A_{J_1}^-\neq\emptyset$, we get that the total cost of $r_1^-$ for a class $1$ user lies in $[48;54]$. Therefore the users of class $1$ are not incitated to change their choice in $\sigma$.

\medskip

For the strategy profile $\hat\sigma$, we get the following flows and costs.\\

\begin{center}
\begin{tabular}{c|cc|cc}
\multicolumn{1}{c}{} & \multicolumn{2}{c}{$\eps=+$} & \multicolumn{2}{c}{$\eps=-$} \\
\hline\noalign{\smallskip}
$J$ with $1\in J$	&  \{1,2,3\} 	& other & $J_1$ & other\\ 
\noalign{\smallskip}\hline\noalign{\smallskip}
$\hat{f}_J^{\eps}$		&  2 		&  0 or 1	&  1.5		& 1.5		 \\
Cost on $A_J^{\eps}$	& 55 		& 0 or 1	& 49.5		& 1.5 \\
\noalign{\smallskip}\hline
\end{tabular}
\end{center}
\medskip
The total cost of $r_1^+$ for a class $1$ user lies in $[55;58]$ and the total cost of $r_1^-$ for a class $1$ user lies in $[49.5;54]$. Therefore the users of class $1$ are not incitated to change their choice in $\hat\sigma$. 

\bigskip

   \item[\textbf{Class $2$}] 
   \begin{description}
\item[\textbf{If $A_{\{1,2\}}^+\neq\emptyset$}] We put in the following tables, the costs experienced by the class $2$ users on the various arcs of $G$ for each of $\sigma$ and $\hat\sigma$. 

\medskip

For the strategy profile $\sigma$: \\
 \begin{center}
\begin{tabular}{c|cc|cc}
\multicolumn{1}{c}{} & \multicolumn{2}{c}{$\eps=+$} & \multicolumn{2}{c}{$\eps=-$} \\
\hline\noalign{\smallskip}
$J$ with $2\in J$	&  \{1,2\} 	& other & $J_2$ & other\\ 
\noalign{\smallskip}\hline\noalign{\smallskip}
$f_J^{\eps}$		&  1.5 		&  0 or 1.5	&  1 or 2	& 1 or 2	 \\
Cost on $A_J^{\eps}$	& 37.5 		& 1.5	& 32 or 33 & 1 or 2 \\
\noalign{\smallskip}\hline
\end{tabular}
\end{center}
\medskip
The total cost of $r_2^+$ for a class $2$ user is precisely $39$ (we use the fact that $A_{\{1,2,3\}}^+\neq\emptyset$) and the total cost of $r_2^-$ lies in $[32;38]$. The users of class $2$ are not incitated to change their choice in $\sigma$.

\medskip

For the strategy profile $\hat\sigma$:\\

\begin{center}
\begin{tabular}{c|cc|cc}
\multicolumn{1}{c}{} & \multicolumn{2}{c}{$\eps=+$} & \multicolumn{2}{c}{$\eps=-$} \\
\hline\noalign{\smallskip}
$J$ with $2\in J$	&  \{1,2\} 	& other & $J_2$ & other\\
\noalign{\smallskip}\hline\noalign{\smallskip}
$\hat{f}_J^{\eps}$		&  1 		&  1 or 2	&  0 or 1.5	& 0 or 1.5	 \\
Cost on $A_J^{\eps}$	& 25 		& 1 or 2	& 31 or 32.5 & 0 or 1.5\\
\noalign{\smallskip}\hline
\end{tabular}
\end{center}
\medskip
The total cost of $r_2^+$ for a class $2$ user lies in $[27;30]$ and the total cost of $r_2^-$ lies in $[31;34]$. The users of class $2$ are not incitated to change their choice in $\hat\sigma$. \\

\item[\textbf{If $A_{\{1,2\}}^-\neq\emptyset$}] We put in the following tables, the costs experienced by the class $2$ users on the various arcs of $G$ for each of $\sigma$ and $\hat\sigma$. 

\medskip

For the strategy profile $\sigma$: \\

\begin{center}
\begin{tabular}{c|cc|cc}
\multicolumn{1}{c}{} & \multicolumn{2}{c}{$\eps=+$} & \multicolumn{2}{c}{$\eps=-$} \\
\hline\noalign{\smallskip}
$J$ with $2\in J$	&  \{1,2,3\} 	& other & $\{1,2\}$ & other\\ 
\noalign{\smallskip}\hline\noalign{\smallskip}
$f_J^{\eps}$		&  1.5 		&  0 or 1.5	&  1	& 1 or 2	 \\
Cost on $A_J^{\eps}$	& 27.5 		& 0 or 1.5	& 22 & 1 or 2 \\
\noalign{\smallskip}\hline
\end{tabular}
\end{center}
\medskip
The total cost of $r_2^+$ for a class $2$ user lies in $[27.5;29]$ and the total cost of $r_2^-$ lies in $[22;27]$. The users of class $2$ are not incitated to change their choice in $\sigma$.

\medskip

For the strategy profile $\hat\sigma$:\\ 

\begin{center}
\begin{tabular}{c|cc|cc}
\multicolumn{1}{c}{}& \multicolumn{2}{c}{$\eps=+$} & \multicolumn{2}{c}{$\eps=-$} \\
\hline\noalign{\smallskip}
$J$ with $2\in J$	&  \{1,2,3\} 	& other & $\{1,2\}$ & other\\ 
\noalign{\smallskip}\hline\noalign{\smallskip}
$\hat{f}_J^{\eps}$		&  2 		&  1 or 2	&  1.5	& 0 or 1.5	 \\
Cost on $A_J^{\eps}$	& 28 		& 1 or 2	& 33 & 0 or 1.5 \\
\noalign{\smallskip}\hline
\end{tabular}
\end{center}
\medskip
The total cost of $r_2^+$ for a class $2$ user lies in $[28;32]$ and the total cost of $r_2^-$ lies in $[33;34.5]$. The users of class $2$ are not incitated to change their choice in $\hat\sigma$. 
 \end{description}
 
 \bigskip
 
    \item[\textbf{Class $3$}] 
The symmetry of the cost functions for classes $2$ and $3$ gives the same tables for class $3$ as for class $2$, by substituting $\{1,3\}$ to $\{1,2\}$. Therefore, we get the same conclusions: neither in $\sigma$, nor in $\hat\sigma$, the class $3$ users are incitated to change their choice.
  \end{description}

\bigskip

Therefore, $\sigma$ and $\hat\sigma$ are equilibria and induce distinct flows. It proves that the uniqueness property does not hold. It remains to check the case when $|L|>3$.

\begin{remark}
 A classical question when there are several equilibria is whether one of them dominates the others. An equilibrium is said to {\em dominate} another one if it is preferable for all users. In this construction, no equilibrium dominates the other, except when $A_{\{1,2,3\}}^+\neq\emptyset$, $A_{\{1,2\}}^-\neq\emptyset$, and $A_{\{1,3\}}^-\neq\emptyset$ where $\sigma$ dominates $\hat \sigma$.
\end{remark}

\subsubsection{If $|L|> 3$} Denote $1$, $2$, and $3$ three OD-pairs of $H=(T,L)$ giving three routes containing the same arc of $G$. For these three arcs of $H$, we make the same construction as above, in the case $|L|=3$. For the other $\ell\in L$, we set $I_{\ell}=\emptyset$ to get the desired conclusion.

However, note that we can also get multiple equilibrium flows, while requiring $I_{\ell}\neq\emptyset$ for all $\ell\in L$. For $\ell\notin\{1,2,3\}$, we use a fourth class, whose costs are very small on all positive arcs of $G$ and very large on all negative arcs of $G$, and whose measure is a small positive quantity $\delta$. Each user of this class chooses always a positive route, whatever the other users do. For $\delta$ small enough, the users of this class have no impact on the choices of the users of the classes $1$, $2$, and $3$, as the difference of cost between the routes is always bounded below by $0.5$.

\section{When the supply graph is a ring having each arc in at most two routes}\label{sec:comb}

In this section, we provide a further combinatorial analysis of the characterization of the uniqueness property for ring graphs stated in Theorem~\ref{thm:main}.

\subsection{Two corollaries}\label{subsec:cor} 

\begin{corollary}\label{cor:ok2}
If the supply graph is a cycle and if there are at most two OD-pairs, i.e. $|L|\leq 2$, then the uniqueness property holds.
\end{corollary}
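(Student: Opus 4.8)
The plan is to deduce the statement immediately from Theorem~\ref{thm:main}, so that the whole task reduces to checking that the hypothesis $|L|\leq 2$ forces every arc of $G$ to lie in at most two routes. First I would fix the planar embedding and orientation of the cycle already introduced in Section~\ref{subsec:claim}, so that every arc of the directed version of $G$ is either positive or negative, and likewise every route is either positive or negative.

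The key observation is that a positive arc can be traversed only in the positive direction, hence it can belong only to positive routes; symmetrically, a negative arc belongs only to negative routes. Since each OD-pair $\ell\in L$ gives rise to exactly one positive route $r_\ell^+$ and one negative route $r_\ell^-$, a fixed positive arc is contained, for each OD-pair, in at most one route (namely, possibly, $r_\ell^+$), and therefore in at most $|L|$ routes in total. The same bound holds for negative arcs by symmetry. With $|L|\leq 2$ this gives at most two routes through any arc, and Theorem~\ref{thm:main} then yields the uniqueness property.

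I do not expect any genuine obstacle here; the only point requiring a line of care is the elementary fact that a positive (resp. negative) arc never appears in a negative (resp. positive) route, which is immediate from the orientation convention together with Claim~\ref{claim:sum}. One could alternatively bypass Theorem~\ref{thm:main} and replay the argument of Proposition~\ref{prop:uniq} directly, since with $|L|\leq 2$ the sets $A_J$ are automatically empty as soon as $|J|\geq 3$; but invoking the already-proven theorem is shorter and cleaner.
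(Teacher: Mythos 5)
Your proof is correct and matches the paper's own (one-line) justification: the paper simply declares Corollary~\ref{cor:ok2} straightforward, the implicit reasoning being exactly yours, namely that each arc is positive or negative and hence lies in at most one route per OD-pair, so $|L|\leq 2$ forces at most two routes per arc and Theorem~\ref{thm:main} applies. Your added care about why a positive arc cannot lie in a negative route is sound and consistent with the orientation convention of Section~\ref{subsec:claim}.
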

 
\begin{corollary}\label{cor:no5}
If the supply graph is a cycle and if the uniqueness property holds, then the number of OD-pairs is at most $4$, i.e. $|L|\leq 4$.
\end{corollary}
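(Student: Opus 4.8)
The plan is to deduce this directly from Theorem~\ref{thm:main} together with Claim~\ref{claim:sum}, by a one-edge double-counting argument. First I would invoke Theorem~\ref{thm:main}: since $G$ is a cycle and the uniqueness property holds, every arc of $G$ is contained in at most two routes. All of the work then reduces to translating this arc-wise bound into a bound on $|L|$.

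Next I would fix an arbitrary edge $e\in E$ of the cycle, with its two stemming arcs $a^+$ (positive) and $a^-$ (negative), and count routes through each of them. The key observation is that a positive arc can lie only on positive routes $r_\ell^+$ and a negative arc only on negative routes $r_\ell^-$. Consequently the number of routes containing $a^+$ equals the number of OD-pairs $\ell\in L$ with $a^+\in r_\ell^+$, and likewise for $a^-$. Applying the at-most-two-routes bound from the previous step, each of these two counts is at most $2$.

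Finally I would apply Claim~\ref{claim:sum}: for every $\ell\in L$, exactly one of $a^+$ and $a^-$ lies on an $\ell$-route. Summing this dichotomy over all $\ell\in L$ shows that the two counts above add up to exactly $|L|$. Combining with the bound of the previous paragraph yields $|L|\le 2+2=4$, which is the claim.

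The argument is short once Theorem~\ref{thm:main} is available, so I do not expect a genuine obstacle; the only point needing care is the bookkeeping in the second step, namely verifying that ``number of routes through $a^+$'' really coincides with ``number of OD-pairs $\ell$ with $a^+\in r_\ell^+$''. This rests on the facts that distinct OD-pairs yield distinct routes (a route determines its endpoints) and that a positive arc never appears on a negative route, so no route is double-counted and no negative route contributes. I would also note that such an edge $e$ exists since a cycle has at least one edge, and that the statement is vacuous when $|L|=0$, so there is nothing to check in that degenerate case.
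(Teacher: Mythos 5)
Your proof is correct and is essentially the paper's own argument: the paper deduces the corollary from Claim~\ref{claim:sum} (if $|L|\geq 5$, then some arc of $G$ must lie in at least three routes) together with Theorem~\ref{thm:main}, which is exactly your double-counting of routes through the two arcs $a^+$ and $a^-$ of a single edge, stated in contrapositive form. The extra bookkeeping you flag (distinct OD-pairs give distinct routes, and positive arcs lie only on positive routes) is sound and implicit in the paper's one-line justification.
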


Corollary~\ref{cor:ok2} is straightforward. Corollary~\ref{cor:no5} is a direct consequence of Claim~\ref{claim:sum}: if $|L|\geq 5$, then there is necessarily an arc of $G$ in three routes.

\subsection{How to compute in one round trip the maximal number of routes containing an arc of $G$}\label{subsec:algo} In an arc $(u,v)$, vertex $u$ is called the {\em tail} and vertex $v$ is called the {\em head}. 
The algorithm starts at an arbitrary vertex of $H$ and makes a round trip in an arbitrary direction, while maintaining a triple \texttt{(list, min, max)}. In this triple, \texttt{list} is a set of arcs of $H$ whose tail has already been encountered but whose head has not yet been encountered. At the beginning, \texttt{list} is empty and \texttt{min}  and \texttt{max}  are both zero. When the algorithm encounters a vertex $v$, it proceeds to three operations. \\

{\bf First operation.} It computes the number of arcs in $\delta_H^-(v)$ (arcs of $H$ having $v$ as head) not in \texttt{list}. The corresponding routes were ``forgotten'', since the tail is ``before'' the starting vertex of the algorithm. To take them into account, this number is added to the values \texttt{min} and \texttt{max}. 

{\bf Second operation.} All arcs of \texttt{list} being also in $\delta_H^-(v)$ are removed from \texttt{list}.

{\bf Third operation.} All arcs in $\delta_H^+(v)$ (arcs of $H$ having $v$ as tail) are added to \texttt{list}. The value of \texttt{min} is updated to the minimum between the previous value of \texttt{min} and the size of \texttt{list}, and similarly \texttt{max} is updated to the maximum between the previous value of \texttt{max} and the size of \texttt{list}. \\

The algorithm stops after one round trip. At the end, the values \texttt{min} and \texttt{max} are respectively the minimal and maximal number of routes containing an arc in the direction chosen. According to Claim~\ref{claim:sum}, $\max(|L|-\texttt{min},\texttt{max})$ is the maximal number of routes containing an arc of $G$.

Note that with a second round trip, this algorithm can specify the routes containing a given arc, by scanning the content of \texttt{list}.

\subsection{Explicit description of the networks having the uniqueness property when the supply graph is a cycle}\label{subsec:explicit}

\begin{proposition}\label{prop:carac} Let the supply graph $G$ be a cycle. Then, for any demand digraph $H$, the pair $(G,H)$ is such that each arc of $G$ is in at most two routes if and only if the mixed graph $G+H$ is homeomorphic to a minor of one of the nine mixed graphs of Figures~\ref{fig:min1}--\ref{fig:min4}.
\end{proposition}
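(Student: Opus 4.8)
The plan is to treat this as a finite classification, up to the natural symmetries of a cycle, matching the combinatorial condition ``each arc in at most two routes'' (call it property~P) against an explicit list of maximal good configurations; I keep the orientation and the positive/negative routes fixed as in Section~\ref{subsec:claim}. The first step is to canonicalize $(G,H)$ up to homeomorphism. Since every route starts and ends at a terminal, all edges of $G$ lying strictly between two cyclically consecutive terminals are contained in exactly the same positive routes and the same negative routes; subdividing or unsubdividing such edges changes neither $G$ up to homeomorphism nor the number of routes through any arc. Hence I may assume the vertices of $G$ are exactly the terminals, with a single edge between consecutive terminals, so that the only relevant data is the cyclic sequence of terminals together with the demand arcs of $H$.

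Next I would reformulate P. Reading the terminals in positive cyclic order, each OD-pair $\ell=(o,d)$ is encoded by the circular interval swept out by $r_\ell^+$, and $r_\ell^-$ is the complementary interval. For a segment $a$ between consecutive terminals, the number of positive routes through $a^+$ is the covering number $n_a$ of $a$ by the positive intervals, and by Claim~\ref{claim:sum} the number of negative routes through $a^-$ equals $|L|-n_a$. Thus P is equivalent to requiring $\max_a n_a \le 2$ and $\min_a n_a \ge |L|-2$ over all segments $a$. In particular $|L|\le 4$, recovering Corollary~\ref{cor:no5}, and when $|L|=4$ every segment is covered by exactly two positive intervals.

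Because $|L|\le 4$, the set of configurations is finite, so the core of the proof is to enumerate, up to rotation, reflection, and global reversal of orientation (all of which preserve P and the homeomorphism-and-minor class), the \emph{maximal} good configurations, and to identify them with the nine mixed graphs of Figures~\ref{fig:min1}--\ref{fig:min4}. I would organize this by cases on $|L|\in\{1,2,3,4\}$ and on which origins and destinations coincide. To close the equivalence I then handle the minor/homeomorphism correspondence: deleting an OD-pair amounts to deleting its demand arc and contracting any segment it leaves uncovered, and identifying two terminals amounts to contracting a supply edge. I would verify that each such operation sends a good pair to a good pair, so that every graph homeomorphic to a minor of one of the nine is good (the ``if'' direction), and conversely that every good pair arises from one of the maximal configurations by exactly these operations, hence is homeomorphic to a minor of one of the nine (the ``only if'' direction).

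The main obstacle I expect is the enumeration together with its bookkeeping: one must verify that the list of maximal good configurations is complete, that it is irredundant (no listed graph is itself a minor of another), and that the informal reduction operations coincide precisely with ``homeomorphic to a minor of'' in the mixed-graph setting. The most delicate point is the behaviour of supply-edge contraction, which may create a would-be loop or a parallel demand arc that must then be discarded; I would need to check that such a contraction never raises a covering number $n_a$, so that goodness is genuinely preserved and the correspondence between the combinatorial reduction and the formal minor relation is exact.
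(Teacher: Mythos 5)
Your proposal follows essentially the same route as the paper's proof: suppress non-terminal vertices so that $V=T$, use Claim~\ref{claim:sum} (your covering-number reformulation) to force $|L|\le 4$ and, for $|L|=4$, exactly two routes through every arc, then finitely enumerate the admissible configurations case by case on $|L|\in\{1,2,3,4\}$ and match them to the nine graphs of Figures~\ref{fig:min1}--\ref{fig:min4}. The paper organizes the enumeration slightly differently (for $|L|=3$ by whether two demand arcs cross in the embedding, for $|L|=4$ via indegree-equals-outdegree forcing circuits in $H$ and excluding length-$3$ circuits) and leaves it at a comparable level of detail, so your plan --- including the care you take that supply-edge contractions keep $H$ simple and preserve the route counts --- is a faithful blueprint of the published argument.
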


Combined with Theorem~\ref{thm:main}, this proposition allows to describe explicitely all pairs $(G,H)$ having the uniqueness property, when $G$ is a cycle.\\

\begin{proof} One direction is straightforward. Let us prove the other direction, namely that, if each arc of $G$ is in at most two routes, then $G+H$ is homeomorphic to a minor of one of the nine mixed graphs. We can assume that $V=T$. According to Corollary~\ref{cor:no5}, we can also assume that $|L|\in\{1,2,3,4\}$. \\

If $|L|\in\{1,2\}$, there is nothing to prove: all possible mixed graphs with $|L|=1$ or $|L|=2$ are homeomorphic to a minor of the graphs of Figures~\ref{fig:min1} and~\ref{fig:min2}.\\

If $|L|=3$, we can first assume that $L$ contains two disjoint arcs that are crossing in the plane embedding. By trying all possibilities for the third arc, we get that the only possible configuration is the right one on Figure~\ref{fig:min3} and the ones obtained from it by edge contraction. Second, we assume that there are no ``crossing'' arcs. The three heads of the arcs cannot be consecutive on the cycle otherwise we would have an arc of $G$ in three routes. Again by enumerating all possibilities, we get that the only possible configuration is the left one on Figure~\ref{fig:min3} and the ones obtained from it by edge contraction.\\

If $|L|=4$, Claim~\ref{claim:comp} shows that each arc of $G$ belongs to exactly two routes. It implies that, in $H$, the indegree of any $\ell\in L$ is equal to its outdegree. There are therefore circuits in $H$. It is straightforward to check that it is impossible to have a length $3$ circuit. It remains to enumerate the possible cases for length $2$ and length $4$ circuits to get that the only possible configurations are the ones of Figure~\ref{fig:min4} and the common one obtained from them by edge contraction.
\end{proof} 

 \begin{figure}\begin{center}
\includegraphics[scale=0.9]{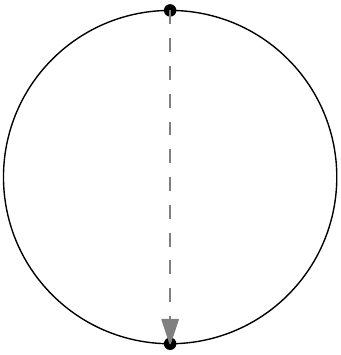}
\caption{All rings with $|L|=1$ (i.e. one OD-pair) having the uniqueness property are homeomorphic to this graph}
\label{fig:min1}
 \end{center}\end{figure}
 
\begin{figure}\begin{center} 
\includegraphics[scale=0.9]{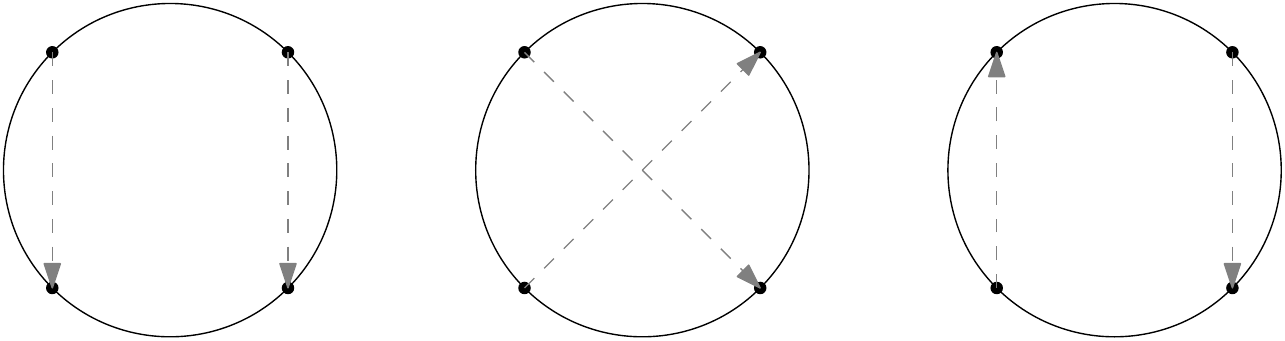}
\caption{All rings with $|L|=2$ (i.e. two OD-pairs) having the uniqueness property are homeomorphic to one or to minors of these graphs}
\label{fig:min2}
\end{center}\end{figure} 

 \begin{figure}\begin{center}
\includegraphics[scale=0.9]{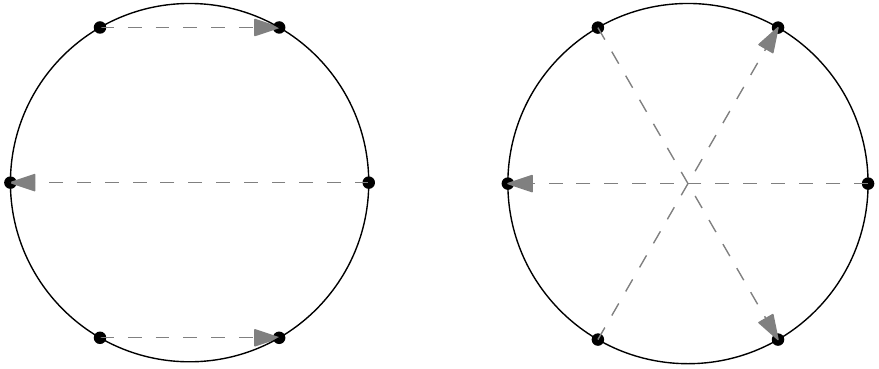}
\caption{All rings with $|L|=3$ (i.e. three OD-pairs) having the uniqueness property are homeomorphic to one or to minors of these graphs}
\label{fig:min3}
\end{center}\end{figure} 

\begin{figure}\begin{center} 
\includegraphics{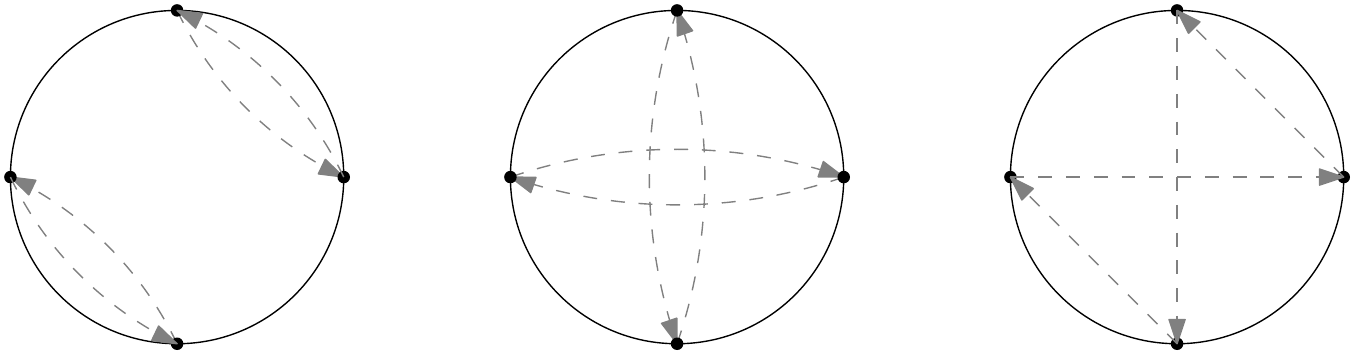}
\caption{All rings with $|L|=4$ (i.e. four OD-pairs) having the uniqueness property are homeomorphic to one or to minors of these graphs}
\label{fig:min4}
\end{center}\end{figure} 

We can also describe the rings having the uniqueness property by minor exclusion, similarly as in \citet{Mi05}. 

\begin{proposition}\label{prop:exclusion}
Let the supply graph $G$ be a cycle. Then, for any demand digraph $H$, the pair $(G,H)$ is such that there exists an arc of $G$ belonging to at least three routes if and only if $G+H$ has one of the nine mixed graphs of Figure~\ref{fig:annexe} as a minor. 
\end{proposition}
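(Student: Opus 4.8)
The plan is to treat the statement as the forbidden-minor reformulation of the negation of the property ``each arc of $G$ is in at most two routes'' characterized in Proposition~\ref{prop:carac}, and to prove the two implications separately. Write $B_1,\dots,B_9$ for the nine mixed graphs of Figure~\ref{fig:annexe}; each is a cycle together with demand arcs, and one checks by inspection that each $B_i$ possesses an arc lying in three routes. The backbone of the argument is the following monotonicity fact: if a mixed graph whose supply part is a cycle has every arc in at most two routes, then so does any mixed graph obtained from it by contracting a supply edge or deleting a demand arc. I expect this to be the cleanest part, since both operations only restrict the routing: deleting a demand arc removes an OD-pair and lowers every count, while contracting a supply edge merely shortens the cycle and deletes that edge from every route, so the set of OD-pairs routing through any surviving arc can only stay the same or decrease. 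Note that deleting a supply edge breaks the cycle and so cannot occur along any reduction whose outcome is a cyclic minor such as some $B_i$, and hence need not be considered.

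Granting this, the direction ``$G+H$ has some $B_i$ as a minor $\Rightarrow$ some arc of $G$ lies in at least three routes'' is immediate by contraposition: if $G+H$ had every arc in at most two routes, then applying the monotonicity fact along a reduction witnessing $B_i\preceq G+H$ would force $B_i$ to have every arc in at most two routes, contradicting that $B_i$ is one of the obstructions.

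For the converse, suppose some arc $a$ of $G$ lies in at least three routes and fix three OD-pairs $\ell_1,\ell_2,\ell_3\in L$ whose routes contain $a$. Deleting from $H$ every other demand arc is a minor operation leaving $a$ in three routes, so we may assume $|L|=3$; indeed, a minor-minimal obstruction must have exactly three OD-pairs, since otherwise one could delete an OD-pair not passing through $a$ (or, if all pass through $a$, delete one while keeping $a$ in three routes) and remain bad, contradicting minimality. I would then contract each maximal path of the cycle whose interior vertices are not endpoints of $\ell_1,\ell_2,\ell_3$ into a single edge; this is again a minor operation and yields a mixed graph whose cycle carries only the (at most six) distinct endpoints $o_1,d_1,o_2,d_2,o_3,d_3$, still with $a$ in all three routes. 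It then remains to enumerate the cyclic arrangements of these endpoints for which a common arc is shared by the three routes — the case analysis complementary to the one carried out for the ``good'' configurations in the proof of Proposition~\ref{prop:carac}, organized using Claim~\ref{claim:sum} and Claim~\ref{claim:comp} to separate positive and negative routes — and to verify that each resulting configuration coincides with one of $B_1,\dots,B_9$.

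The main obstacle is precisely this last enumeration, i.e. showing that the list of Figure~\ref{fig:annexe} is exhaustive so that no cyclic placement of three OD-pairs sharing an arc escapes the nine displayed obstructions. I would control it by normalizing $a$ to be a positive arc traversed by all three positive routes, recording for each $\ell_k$ the position of its origin and destination relative to $a$, and cutting down the number of cases by the symmetry between the two orientations of the cycle (Claim~\ref{claim:comp}) together with the relabelling symmetry among $\ell_1,\ell_2,\ell_3$; the finitely many surviving placements are then matched one by one with the graphs of Figure~\ref{fig:annexe}.
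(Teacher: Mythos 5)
Your proposal is correct, and one of its two directions follows a genuinely different route from the paper. For the implication ``$G+H$ has some $B_i$ as a minor $\Rightarrow$ some arc of $G$ is in at least three routes'', the paper does not argue combinatorially at all: it uses the equilibrium construction of Section~\ref{sec:construction} to produce two distinct \emph{strict} equilibria on the minor, extends this counterexample to $(G,H)$ via Corollary~\ref{cor:minor}, and then concludes through Theorem~\ref{thm:main} that some arc must lie in three routes. Your monotonicity lemma --- that ``every arc in at most two routes'' is preserved under deleting demand arcs and contracting supply edges, while deleting a supply edge destroys the cycle and hence cannot occur along a reduction to a cyclic $B_i$ --- replaces this game-theoretic detour by a purely combinatorial, self-contained argument, and it is sound: contracting a supply edge preserves the cyclic order, hence the incidence of every surviving arc with every positive or negative route, and any demand loops or parallel demand arcs created by identification of endpoints are deleted or merged, which only lowers the route counts. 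What the paper's route buys is that it needs no separate inspection of the $B_i$; what yours buys is independence from the equilibrium machinery, at the price of checking by hand that each of the nine graphs of Figure~\ref{fig:annexe} indeed has an arc in three routes. For the converse direction, you and the paper do essentially the same thing: reduce to $|L|=3$ and a cycle carrying only the (at most six) endpoints, then perform a finite enumeration whose details both of you omit; your normalization of $a$ and use of the orientation symmetry (Claim~\ref{claim:comp}) and of relabelling symmetry is a sensible way to organize the case analysis that the paper describes as ``explicit, but tedious''.

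One small correction in that last step: since the nine graphs of Figure~\ref{fig:annexe} are the minor-\emph{minimal} obstructions (several have fewer than six vertices), the configurations surviving your enumeration need not \emph{coincide} with some $B_i$, as you write; you must instead show that each of them admits a further sequence of edge contractions reaching some $B_i$, i.e., contains one as a minor --- exactly the extra contraction phase that appears in the paper's enumeration. This does not affect the validity of your plan, only the phrasing of its final verification.
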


 \begin{proof}[Proof (sketched)]
 Suppose that one of the nine mixed graphs of Figure~\ref{fig:annexe} is a minor of the mixed graph $G+H$. The construction of Section~\ref{sec:construction} shows that we can build two distinct equilibria for this minor where all users of a given class have the same strategy. Then, we can extend this counterexample to the graph $G+H$, see Corollary~\ref{cor:minor} in Section~\ref{subsec:general} holding for more general graphs.
 
  To prove that if an arc of $G$ belongs to at least three routes, then one of the nine mixed graphs of Figure~\ref{fig:annexe} is a minor of $G+H$, we proceed to an explicit, but tedious,  enumeration. We enumerate all possible mixed graphs with $|V|=6$ and $|L|=3$ such that each vertex is the tail or the head of exactly one arc in $L$.	Then, we try all possible sequences of edge contractions leading to mixed graphs satisfying two properties: the demand graph is simple and an arc is in three routes. We keep the mixed graphs such that any additional edge contraction 
leads to a violation of these properties. The details are omitted.
\end{proof}

In particular, the construction in Section~\ref{sec:construction} cannot be simplified by exhibiting a counterexample for each mixed graph of Figure~\ref{fig:annexe}, since the proof of Proposition~\ref{prop:exclusion} needs this tedious enumeration.

\section{Discussion}\label{sec:disc}

\subsection{Results for general graphs}\label{subsec:general}

For the sake of simplicity, given a supply graph $G$ and a demand digraph $H$, we say that the mixed graph $G+H$ has the uniqueness property if the pair $(G,H)$ has it.

Using the results of \citet{Mi05} and Theorem~\ref{thm:main}, we can derive results for more general graphs. Milchtaich suggests to add a fictitious origin, linked to all origins, and similarly for the destinations. If the new graph has the uniqueness property, the original one has it as well. However, this approach cannot be used to prove that a graph does not have the uniqueness property. 
For instance this method allows us to prove that the graph on the left in Figure~\ref{fig:milch} has the uniqueness property, but fails to settle the status of the graph on the right. Indeed, the new graph does not have the uniqueness property, using the result of \citet{Mi05}, but the original one has it, using Theorem~\ref{thm:main}.

A way for proving that a pair $(G,H)$ does not have the uniqueness property consists in using subgraphs or minors as obstructions to uniqueness property. If $G+H$ has a subgraph without the uniqueness property, then it does not have the property either. However, it is not clear whether having a minor without uniqueness property is an obstruction for having the uniqueness property. Indeed, the cost functions are strictly increasing and we do not see how in general a counterexample to uniqueness at the level of a minor can be extended at the level of the network itself. Yet, we can settle two specific cases.

The first case is when the contractions involve only bridges of $G$ (a {\em bridge} is an edge whose deletion disconnects the graph). In this case, if the minor does not have the uniqueness property, the pair $(G,H)$ does not have it either. Checking this property is easy.

A second case is formalized in the following proposition. An equilibrium is {\em strict} if each user as a unique best reply.

\begin{proposition}
Let $G'$ and $H'$ be respectively a supply and a demand graphs such that $G'+H'$ is a minor of $G+H$. If there are counterexamples of uniqueness property for $(G',H')$ involving strict equilibria, then $(G,H)$ does not have the uniqueness property.
\end{proposition}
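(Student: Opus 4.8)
The plan is to turn a pair of distinct \emph{strict} equilibria witnessing non-uniqueness for $(G',H')$ into a pair of distinct equilibria for $(G,H)$, by ``lifting'' the minor back up to $G+H$ and using the strictness margin to absorb the cost perturbations that the lifting unavoidably introduces. Recall that $G'+H'$ is obtained from $G+H$ by deleting some vertices, edges and arcs, and by contracting a forest $F$ of edges of $G$; I would reverse these operations one type at a time.

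First I would fix the combinatorial data on $(G,H)$. For every OD-pair (arc of $H$) deleted in passing to $H'$, set the corresponding population $I_\ell=\emptyset$, so that such pairs impose no routing constraint. For every edge of $G$ deleted in passing to the subgraph, assign to every user a strictly increasing cost function with very large values, say $c_a^i(x)=Nx+N$ for a huge constant $N$, so that no reasonable route uses it. For every edge $a$ lying in the contracted forest $F$, assign to every user the cost function $c_a^i(x)=\eps x+\eps$, with $\eps>0$ to be fixed later; this is genuinely strictly increasing and nonnegative, as required. On the edges of $G$ surviving in $G'$, and for the surviving OD-pairs, I would simply keep the cost functions, the classes, and the user partition of the given counterexample for $(G',H')$.

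Next I would lift the two strategy profiles. Each connected component of $F$ is a tree in $G$, and a route of $G'$ passing through the corresponding contracted vertex is realized in $G$ by the unique path through that tree joining the relevant entry and exit arcs (the terminals being represented by fixed vertices of the trees). Inserting these canonical paths turns each route $r'$ of $G'$ into a route $r$ of $G$, and lifting $\sigma'$ and $\hat\sigma'$ user by user yields two profiles $\sigma$ and $\hat\sigma$ on $(G,H)$. Since $\sigma'$ and $\hat\sigma'$ already differ in flow on some surviving edge, so do $\sigma$ and $\hat\sigma$, and it remains only to verify that both are equilibria. Here strictness does the work: let $\gamma>0$ be the minimum, over all users and over both equilibria of $(G',H')$, of the gap between the cost of the chosen route and the cost of the best alternative, which is positive precisely because the equilibria are strict. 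In $G$, the cost of any lifted route equals its cost in $G'$ plus a perturbation coming from the traversed tree edges, and this perturbation is bounded by $(\,|F|+1\,)\eps$ plus a flow term at most $\eps\,\lambda(I)$, hence by some $P(\eps)\to 0$ as $\eps\to 0$, uniformly over users and routes. A minimum-cost route of $G$ never uses a deleted edge once $N$ exceeds the ($N$-independent) cost of a lifted chosen route, and never wanders inside a tree since each tree edge has positive cost; its contraction is therefore a genuine route of $G'$. Comparing it with the lifted chosen route and invoking the margin $\gamma$, one sees that as soon as $\eps$ is chosen so that $P(\eps)<\gamma/2$, the lifted chosen route is strictly the cheapest route in $G$. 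Thus every user's lifted route is their unique best reply, $\sigma$ and $\hat\sigma$ are (strict) equilibria with distinct flows, and $(G,H)$ lacks the uniqueness property.

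The step I expect to be the main obstacle is exactly this route correspondence under expansion. Because the new edges replacing the contracted forest must carry strictly increasing --- hence non-constant --- costs, the lifting necessarily perturbs every route cost, and without a uniform positive gap an arbitrarily small such perturbation could turn a strict inequality into an equality or a reversal and destroy the equilibrium; this is the very difficulty flagged in the discussion preceding the statement. Restricting to strict counterexamples supplies the gap $\gamma$, and once the quantitative bound $P(\eps)<\gamma/2$ is in force the remaining bookkeeping --- a route possibly entering and leaving several trees, and the placement of terminals inside expanded components --- is routine.
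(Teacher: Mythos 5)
Your proposal is correct and follows essentially the same strategy as the paper's proof: the paper likewise lifts a strict counterexample by de-contracting the minor (edge by edge rather than in one pass), assigns the restored edges cost functions small enough that the strictness margin keeps every user's lifted route a strict best reply, and handles deleted edges and OD-pairs exactly as you do, via the observation that a subgraph without the uniqueness property is itself an obstruction. Your write-up merely makes explicit the quantitative bookkeeping (the bound $P(\eps)<\gamma/2$, the unique tree paths through contracted components, the large-$N$ costs and empty populations) that the paper's two-line argument leaves implicit.
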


\begin{proof}
We start with a counterexample for $(G',H')$. We de-contract an edge. We assign to this edge a small enough cost function so that the route followed by any user remains a strict best reply for him, whether the route contains the edge or not. Therefore, we can de-contract all edges and get conterexamples to uniqueness property for subgraphs of $G+H$. As noted above, it allows to conclude that $(G,H)$ does not have the uniqueness property.
\end{proof} 

 Since the construction of Section~\ref{sec:construction} provides strict equilibria, we get the following corollary.

\begin{corollary}\label{cor:minor}
 Any mixed graph containing one of the graphs of Figure~\ref{fig:annexe} as a minor does not have the uniqueness property.
\end{corollary}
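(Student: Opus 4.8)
The plan is to invoke the preceding Proposition, taking the smaller pair $(G',H')$ to be one of the nine mixed graphs of Figure~\ref{fig:annexe}. By hypothesis, this $G'+H'$ is a minor of the given mixed graph, so it suffices to exhibit, for each of the nine graphs, a counterexample to the uniqueness property that involves only strict equilibria.

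First I would observe that each of the nine graphs has a cycle as supply graph, has exactly three OD-pairs, and possesses an arc lying in three routes---this is precisely the content of Proposition~\ref{prop:exclusion}, which identifies Figure~\ref{fig:annexe} as the list of minimal such configurations. Hence the explicit construction of Section~\ref{sec:construction} in the case $|L|=3$ applies verbatim: it supplies affine, strictly increasing cost functions together with two strategy profiles $\sigma$ and $\hat\sigma$ that are equilibria inducing distinct flows. This already shows that none of the nine graphs has the uniqueness property.

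The one point requiring verification is that these equilibria are \emph{strict}, i.e. that every user has a unique best reply. For this I would read off the cost tables of Section~\ref{sec:construction}: in each profile and for each class $k\in\{1,2,3\}$, the total cost of $r_k^+$ and that of $r_k^-$ are confined to disjoint closed intervals (for instance $[43;47.5]$ against $[48;54]$ for class $1$ under $\sigma$). Since the gap between the two intervals is always at least $0.5$, the two available routes of every user have strictly different costs, so each user's chosen route is his unique minimizer. Thus both $\sigma$ and $\hat\sigma$ are strict equilibria, as required.

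The remaining step is immediate: applying the preceding Proposition with $(G',H')$ equal to one of the nine graphs yields that any mixed graph containing $G'+H'$ as a minor fails the uniqueness property. The main---indeed the only---obstacle is the strictness check of the third step, but this is routine, since the construction was designed so that the two route-costs are separated by a definite positive gap in every case and sub-case (the two sub-cases according to whether $A_{\{1,2\}}^+$ or $A_{\{1,2\}}^-$ is nonempty, and symmetrically for classes $2$ and $3$).
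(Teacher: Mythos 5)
Your proposal is correct and takes essentially the same route as the paper: the authors likewise obtain the corollary by applying the preceding proposition with $(G',H')$ one of the nine graphs, noting that the construction of Section~\ref{sec:construction} provides \emph{strict} equilibria (the two route costs of every user being separated by a gap of at least $0.5$ in all cases and sub-cases). Your explicit verification of strictness via the disjoint cost intervals is precisely the check the paper leaves implicit in its one-line justification.
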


Let us now give an example using some of these conditions. According to Corollary~\ref{cor:minor} or to the ``bridge-contraction'' condition above, the mixed graph of the Figure~\ref{fig:subgraph} does not have the uniqueness property. Indeed, a ring with an arc in three routes is a minor of it. 
We can conclude that any network having the mixed graph of the Figure~\ref{fig:subgraph} as a subgraph (or as a minor) does not have the uniqueness property. 

However, there are still graphs for which none of the considerations above allows to conclude, see for example the graph of Figure~\ref{fig:unknown}.

 \begin{figure}\begin{center}
\includegraphics[scale=0.9]{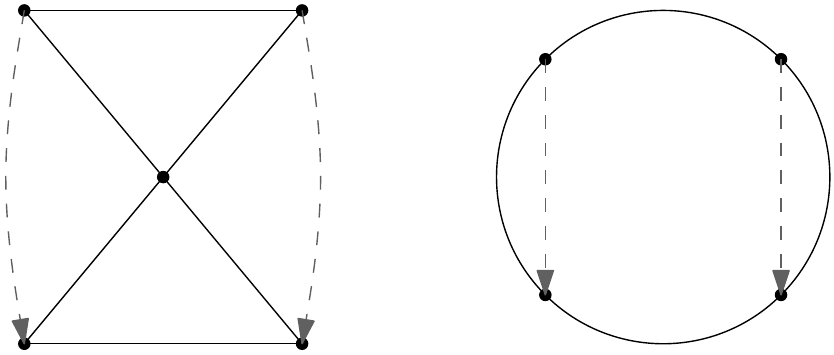}
\caption{Adding a fictitious origin and a fictitious destination settles the case of the left graph but not the case of the right graph}
\label{fig:milch}
 \end{center}\end{figure}

 \begin{figure}\begin{center}
\includegraphics[scale=0.9]{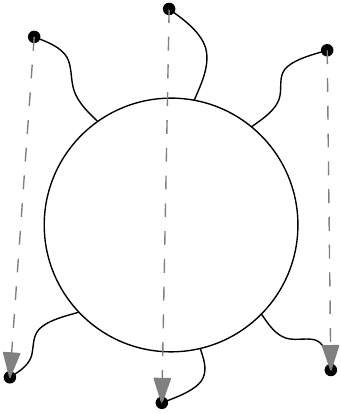}
\caption{Any graph having this one as a subgraph does not have the uniqueness property}
\label{fig:subgraph}
 \end{center}\end{figure}

\begin{figure}\begin{center} 
\includegraphics[scale=1]{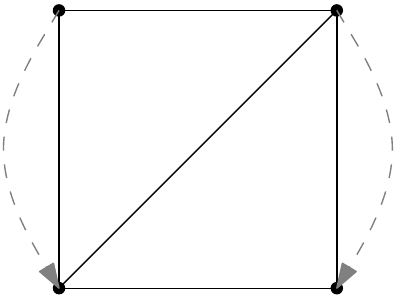}
\caption{A graph for which neither Theorem~\ref{thm:main} nor \citet{Mi05} can be used to prove or disprove the uniqueness property}
\label{fig:unknown}
 \end{center}\end{figure}

%


\subsection{Equivalence of equilibria}\label{subsec:eq}
Let us assume that we have a finite set $K$ of classes. We denote by $I_{\ell}^k$ the set of class $k$ users in $I_{\ell}$ and we assume that all $I_{\ell}^k$ are measurable.

Let $\sigma$ and $\hat{\sigma}$ be two Nash equilibria. We define for an $\ell \in L$, a class $k$, and an arc $a$ the quantity $$f_{\ell,a}^k=\lambda\{i\in I_\ell^k:\,a \in \sigma(i)\},$$ and $$\f_{\ell,a}^k=\lambda\{i\in I_\ell^k:\, a\in\hat\sigma(i)\}.$$ Following \citet{Mi05}, we say that the two equilibria are {\em equivalent} if not only the flow on each arc is the same but the contribution of each pair and each class to the flow on each arc is the same, i.e. $f_{\ell,a}^k = \f_{\ell,a}^k$ for any arc $a$, OD-pair $\ell$, and class $k$. Milchtaich proved that a two-terminal network has the uniqueness property if and only if every two Nash equilibria are equivalent for generically all cost functions (Theorem 5.1 in \citet{Mi05}). A property is considered {\em generic} if it holds on an open dense set. ``Open'' and ``dense'' are understood according to the following metric on the cost functions. 

Define the set $\G$ of assignments of continuous and strictly increasing cost functions $(c_a^i)_{a\in A, i\in I}$, with
$c_a^i : \mathbb{R}_+ \to \mathbb{R}_+$ such that $c_a^i=c_a^{i'}$ whenever $i$ and $i'$ belong to the same class.
 
Given a particular element of $\G$, the function $i \mapsto c_a^i(x)$ is measurable for all $a \in A$ and $x \in \mathbb{R}_+$. Every element of $\G$ has therefore a nonempty set of Nash equilibria. Note that the set $\G$ depends on the partition of the population in classes.
 We can define the distance between two elements $(c_a^i)_{a\in A, i\in I}$ and $(\tilde{c}_a^i)_{a \in A, i \in I}$ of $\G$ by $\max |c_a^i(x)-\tilde{c}_a^i(x)|$, where the maximum is taken over all $a \in A$, $i\in I$ and $x\in \mathbb{R}_+$.
 This defines a metric for $\G$.
 
\begin{theorem}
Assume that the supply graph $G$ is a cycle. Then, for any demand digraph $H$, the following assertions are equivalent:
  \begin{itemize}
   \item[]\textup{(i)} $(G,H)$ has the uniqueness property.
   \item[]\textup{(ii)} For every partition of the population into classes, there is an open dense set in $\G$ such that for any assignment of cost functions that belongs to this set, every two equilibria are equivalent.
  \end{itemize}
\end{theorem}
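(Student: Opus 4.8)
The plan is to prove the two implications separately, treating (i)$\Rightarrow$(ii) as a genericity statement and (ii)$\Rightarrow$(i) by contraposition from the explicit counterexamples.

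For (ii)$\Rightarrow$(i) I would argue the contrapositive. If $(G,H)$ does not have the uniqueness property, then by Theorem~\ref{thm:main} some arc of $G$ lies in at least three routes, and the construction of Section~\ref{sec:construction} produces affine cost functions $c_0$ together with two equilibria $\sigma,\hat\sigma$ whose arc flows differ; moreover these equilibria are strict, each user having a unique best reply. Strictness means every route comparison in the construction is a \emph{strict} inequality, and since these comparisons depend continuously on the cost assignment (the profiles $\sigma,\hat\sigma$, hence their arc flows, are fixed), there is a whole ball $B(c_0,\rho)\subseteq\G$ for the class partition given by the three (or four) classes of the construction on which $\sigma$ and $\hat\sigma$ remain equilibria with the same, distinct, arc flows. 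Thus every $c\in B(c_0,\rho)$ admits two non-equivalent equilibria, so for this class partition no open dense subset of $\G$ can consist of assignments with pairwise equivalent equilibria, as such a set would have to meet $B(c_0,\rho)$. Hence (ii) fails.

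For (i)$\Rightarrow$(ii), first the reduction. Assume each arc lies in at most two routes and fix a partition into classes; by Theorem~\ref{thm:main} the arc flow $f_a=f_a(c)$ is the same in every equilibrium. Fix a used pair $(\ell,k)$, i.e.\ $\lambda(I_\ell^k)>0$. Since $G$ is a cycle, all its users share the two edge-disjoint routes $r_\ell^+,r_\ell^-$ and the class-$k$ cost functions, so $f_{\ell,a}^k$ is constant along $r_\ell^+$ (resp.\ $r_\ell^-$), equal to the mass $x_\ell^{k,+}$ (resp.\ $x_\ell^{k,-}=\lambda(I_\ell^k)-x_\ell^{k,+}$) of class-$k$ pair-$\ell$ users choosing the positive (resp.\ negative) route. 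Equivalence of two equilibria is therefore exactly the coincidence of every split $x_\ell^{k,+}$. The key observation is that, the arc flows agreeing in the two equilibria, the route costs $C_\ell^{k,\pm}=\sum_{a\in r_\ell^\pm}c_a^k(f_a)$ are the same in both; so if a class-$k$ pair-$\ell$ user picks different routes in the two equilibria, both routes are minimal and $C_\ell^{k,+}=C_\ell^{k,-}$. Thus two equilibria can be non-equivalent \emph{only} if some used pair is indifferent. Setting $g_\ell^k(c)=\sum_{a\in r_\ell^+}c_a^k(f_a(c))-\sum_{a\in r_\ell^-}c_a^k(f_a(c))$, it suffices to show that $\{c:\ g_\ell^k(c)\neq 0\}$ is open and dense for each used pair, since a finite intersection of open dense sets is open dense and on it every two equilibria are equivalent. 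Openness follows from continuity of $c\mapsto f(c)$: the flows lie in a fixed compact set, so any limit point of $f(c_n)$ for $c_n\to c$ is an equilibrium flow for $c$ and equals $f(c)$ by uniqueness, whence $f$, and then $g_\ell^k$, is continuous.

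The density statement is the hard part. Fixing a used pair $(\ell,k)$ and a distinguished arc $a_0\in r_\ell^+$ (the two routes are edge-disjoint in a cycle, so $a_0\notin r_\ell^-$), I would perturb only $c_{a_0}^k$, replacing it by $c_{a_0}^k+t$ for small $t\ge 0$; this stays in $\G$ and moves $c$ by at most $t$. The obstacle is that raising $c_{a_0}^k$ alters both the explicit cost and the equilibrium flow $f(c^{(t)})$, so a priori the induced flow adjustment could exactly cancel the direct increase and leave $g_\ell^k\equiv 0$ on an interval. The crux is therefore a comparative-statics lemma: $t\mapsto g_\ell^k(c^{(t)})$ is non-constant near $t=0$—indeed non-decreasing—so that $\{g_\ell^k=0\}$ has empty interior. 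I expect to establish this by combining continuity with a boundary analysis (as $t\to\infty$ the class-$k$ pair-$\ell$ users abandon $r_\ell^+$, forcing $g_\ell^k>0$) and a monotone-response argument adapted from the $\Delta$-chain technique of Proposition~\ref{prop:uniq}: increasing the cost on a single positive arc for one class cannot make the positive route relatively cheaper at equilibrium. Granting this lemma, each $\{g_\ell^k\neq 0\}$ is closed-complemented nowhere dense, hence open dense, their finite intersection over used pairs is open dense, and on it all equilibria are equivalent, which completes (i)$\Rightarrow$(ii) and the proof.
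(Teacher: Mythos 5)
Your (ii)$\Rightarrow$(i) direction is exactly the paper's argument (strict equilibria from the construction of Section~\ref{sec:construction} survive on a small ball in $\G$), and your openness argument for $g_\ell^k$ via uniqueness-implies-continuity of $c\mapsto f(c)$ is sound. The genuine gap is the density claim in (i)$\Rightarrow$(ii): the set $\{c:\ g_\ell^k(c)=0\}$ can have \emph{nonempty interior} even when (i) holds, so your plan of intersecting the sets $\{g_\ell^k\neq 0\}$ cannot work. Concretely, let $G$ be a cycle of length two (two parallel edges between $o$ and $d$), $H$ the single pair $(o,d)$, one class of mass $1$ with $c_{a^+}(x)=c_{a^-}(x)=x$. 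The unique equilibrium splits the population $\tfrac12$--$\tfrac12$ and $g=0$. Now take \emph{any} assignment $(\tilde c_{a^+},\tilde c_{a^-})$ at distance $\eps<\tfrac12$ from this one: the function $h(x)=\tilde c_{a^+}(x)-\tilde c_{a^-}(1-x)$ is strictly increasing with $h(0)\leq 2\eps-1<0$ and $h(1)\geq 1-2\eps>0$, so the equilibrium split is interior and the pair is again exactly indifferent, i.e.\ $g=0$ on the whole ball. This also shows why your comparative-statics lemma, even if true as stated, does not deliver the conclusion: adding $t$ to $c_{a_0}^k$ is absorbed by shifting the interior split, so $t\mapsto g_\ell^k(c^{(t)})$ is non-decreasing yet \emph{identically zero} on an interval; ``non-decreasing'' does not imply ``non-constant near $t=0$'', and the $t\to\infty$ boundary analysis says nothing about small $t$.

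The underlying issue is that your sufficient condition for equivalence (no used pair indifferent) is strictly stronger than what is generically true: robust interior indifference is ubiquitous and harmless, and generic equivalence must tolerate it. This is why the paper (following Theorem 5.1 of \citet{Mi05}) takes a different route: define $\phi=\sum_{k,\ell}\lambda(I_\ell^k)\phi_\ell^k$, the mean number of minimal-cost routes (here $\phi_\ell^k\in\{1,2\}$, well defined by (i)); show $\phi$ is upper semicontinuous with finite range; show that at every \emph{continuity point} of $\phi$ all equilibria are equivalent; and conclude because the continuity points of an u.s.c.\ finite-range function on a metric space are open and dense. The second step is where indifference is handled correctly: from two nonequivalent equilibria one constructs a third equilibrium $\bar\sigma$ in which some $(\ell,k)$ places zero mass on a route $r_1$ carrying positive $I_\ell^k$-mass in both original equilibria (using that the two $\ell$-routes are arc-disjoint on the cycle, Claim~\ref{claim:sum}); then adding a small $\delta>0$ to $c_{a_1}^i$, $i\in I_\ell^k$, on a single arc $a_1\in r_1$ removes $r_1$ from the minimal-cost routes of $I_\ell^k$ while changing nothing else, forcing a drop of $\phi$ by at least $\lambda(I_\ell^k)$ and contradicting continuity. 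In your parallel-edge example this machinery works precisely because the robustly indifferent point is a \emph{continuity} point of $\phi$ (both routes stay minimal under perturbation), whereas your criterion wrongly classifies it as bad. To salvage your scheme you would have to replace ``no used pair indifferent'' by a condition tolerating robust interior indifference, which is essentially what the continuity-point set of $\phi$ encodes.
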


\begin{proof}[Proof (sketched)]
Up to slight adaptations, the proof is the same as the one of Theorem 5.1 in \citet{Mi05}. \\

If (i) does not hold, we can use the construction of Section~\ref{subsec:nonuniq} to build two distinct equilibria for an assignment in $\G$. These equilibria are such that the gap between the costs of the two routes available to any user is uniformly bounded from below by a strictly positive number. The equilibria are said to be {\em strict}.
Thus, in a  ball centered on this assignment with radius $\rho>0$ small enough, we still have two equilibria with distinct flows, which cannot be equivalent. Therefore (ii) does not hold either. \\

If (i) holds, three claims (Claims~1,~2, and~4 of \citet{Mi05}) lead to the desired conclusion, namely that (ii) holds. These three claims are now sketched. Their original proof does not need to be adapted, except for the second one, which is the only moment where the topology of the network is used. In our case the second claim gets a simpler proof.

 For an assignment in $\G$, we denote $\phi_{\ell}^k$ the number of minimal-cost routes for users in $I_\ell^k$, which is in our case $1$ or $2$. Since the uniqueness property is assumed to hold, this number is fully determined by the assignment in $\G$. Define the mean number of minimal-cost routes by $$\phi=\sum_{k\in K,\ell\in L}\lambda(I_{\ell}^k)\phi_{\ell}^k.$$

The first claim states that the map by $\phi:\G\rightarrow\mathbb{R}$ is upper semicontinuous and has finite range. 

The second claim states that for every assignment of cost functions in $\G$ that is a point of continuity of $\phi$, all Nash equilibria are equivalent. To prove this second claim, we consider two Nash equilibria assumed to be nonequivalent $\sigma$ and $\hat\sigma$. Using these two equilibria, a new one is built, $\bar\sigma$, such that for some $\ell\in L$, some class $k$ and some $\ell$-route $r_1$ we have $f_{\ell,r_1}^k>0$ and $\hat f_{\ell,r_1}^k>0$, but $\bar{f}_{\ell,r_1}^k=0$. As the two $\ell$-routes do not share any arc (Claim~\ref{claim:sum} of Section~\ref{subsec:claim}),  we have $f_{\ell,a}^k>0$, $\hat f_{\ell,a}^k>0$, and $\bar{f}_{\ell,a}^k=0$ for any $a$ in $r_1$.

The second claim is achieved by choosing any $a_1$ in $r_1$ and by adding a small value $\delta>0$ to the cost function $c_{a_1}^i$ for $i\in I_{\ell}^k$, while keeping the others unchanged. It can be checked that for $\delta$ small enough, the set of minimal-cost routes is the same as for $\delta=0$, minus the route $r_1$ for users in $I_{\ell}^k$. The map $\phi$ has therefore a discontinuity of at least $\lambda(I_\ell^k)$ at the original assignment of cost functions.

Finally, the third claim allows to conclude: in every metric space, the set of all points of continuity of a real-valued upper semicontinuous function with finite range is open and dense. 
\end{proof}

\subsection{The strong uniqueness property}\label{subsec:stronguniq}

A supply graph is said to have the {\em strong uniqueness property} if for any choice of the OD-pairs, the uniqueness property holds. In other words, $G=(V,E)$ has the strong uniqueness property if, for any digraph $H=(T,L)$ with $T\subseteq V$, the pair $(G,H)$ has the uniqueness property.

\begin{theorem}\label{thm:strong}
A graph has the strong uniqueness property if and only if no cycle is of length $3$ or more.
\end{theorem}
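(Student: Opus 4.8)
The plan is to characterize the graphs $G$ having the strong uniqueness property via Theorem~\ref{thm:main} applied to cycles inside $G$. The condition ``no cycle is of length $3$ or more'' means that every cycle in $G$ is a digon, i.e. a pair of parallel edges; equivalently, $G$ has no cycle through three or more distinct vertices. The strategy is to prove both directions by reducing to the ring case already settled.

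For the \emph{necessity} direction, I would assume $G$ contains a cycle $C$ of length at least $3$ and construct a demand digraph $H$ for which $(G,H)$ fails the uniqueness property. The idea is that a long enough cycle lets us place three OD-pairs so that one arc lies on three routes, then invoke Theorem~\ref{thm:main}. Concretely, on a cycle of length $\ell\geq 3$ I would choose three OD-pairs whose positive routes all traverse a common arc; Claim~\ref{claim:sum} and the counting in Corollary~\ref{cor:no5} suggest that once $\ell\geq 5$ this is automatic, but for $\ell=3,4$ I would exhibit the OD-pairs explicitly. Placing all demand arcs on the vertices of $C$, the restriction of the game to $C$ (setting $I_\ell=\emptyset$ for users needing to leave $C$) is exactly a ring game with an arc in three routes, so the construction of Section~\ref{subsec:nonuniq} produces two distinct equilibrium flows. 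The subtle point is that $G$ may have edges outside $C$, so I must argue that the chosen cost functions on $C$, combined with prohibitively large costs on edges off $C$, keep every user confined to $C$; since each relevant user has only the two routes around $C$ available once the off-cycle arcs are made arbitrarily expensive, this extends the counterexample to $G$.

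For the \emph{sufficiency} direction, I would assume every cycle of $G$ has length at most $2$ (all cycles are digons) and show the uniqueness property holds for every demand digraph $H$. The key structural observation is that such a graph is built from a tree by replacing some edges with parallel pairs; removing all bridges, the two-edge-connected components are exactly single digons. For any OD-pair $(o,d)$, the $(o,d)$-routes differ only in which side of a digon they use: along the unique tree-path from $o$ to $d$, every bridge is forced, and at each digon the route chooses one of the two parallel arcs. This means each user effectively faces an \emph{independent} binary choice at each digon on his path, and the congestion on a digon arc comes only from users whose path crosses that digon. I would then invoke Proposition~\ref{prop:more2}, or argue directly that the flow on every arc is uniquely determined: on a bridge the flow equals the total demand of OD-pairs separated by it (independent of $\sigma$), and on each digon the two arcs carry a split determined by a one-dimensional equilibrium condition that is unique because the cost functions are strictly increasing.

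The main obstacle I expect is the sufficiency direction, specifically verifying that the binary choices at different digons genuinely decouple into independent strictly-monotone fixed-point problems. The decoupling is intuitively clear because distinct digons share no arcs and a user's cost is additive over arcs, but making this rigorous requires checking that the equilibrium condition at a given digon depends only on the local flow there (the bridge flows being fixed constants), so that uniqueness on each digon follows from strict monotonicity exactly as in the one-arc-pair case. The necessity direction is comparatively routine once the confinement-to-$C$ argument is in place, since it just repackages the already-proven ring counterexample; the only care needed is the explicit placement of three OD-pairs on cycles of length $3$ and $4$, which can be done by inspection.
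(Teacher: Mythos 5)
Your necessity direction is sound and essentially the paper's: the paper places the demands $(u,v),(u,w),(v,w)$ on three vertices of a cycle, notes that the resulting mixed graph is the top-left graph of Figure~\ref{fig:annexe} (or has it as a minor for longer cycles), and invokes Corollary~\ref{cor:minor}; your variant --- apply Theorem~\ref{thm:main} to the cycle $C$ and confine users to $C$ by prohibitively expensive off-cycle arcs --- is the subgraph obstruction the paper itself endorses in Section~\ref{subsec:general}, and it works because the route costs in the ring counterexample are uniformly bounded. One slip there: your appeal to Corollary~\ref{cor:no5} for cycle length $\geq 5$ confuses the length of $C$ with the number $|L|$ of OD-pairs; no case distinction on the length is needed anyway, since with $u,v,w$ in positive cyclic order the three demands above put every negative arc between $u$ and $w$ in all three negative routes, on a cycle of any length at least $3$.

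The genuine gap is in the sufficiency direction. First, ``no cycle of length $3$ or more'' does \emph{not} imply that every non-bridge block is a digon: three or more parallel edges between the same pair of vertices create only cycles of length $2$, so such bundles are allowed and must be handled, and there a user's choice is $k$-ary rather than binary. Second, Proposition~\ref{prop:more2} is inapplicable even in the all-digon case: a user whose tree path crosses two digons already has four available routes, whereas the proposition requires each user to have exactly two strategies, one contained in $A^+$ and one in $A^-$. Your fallback direct argument contains the right decoupling idea --- the routes of a user are exactly the products of one forward arc per bundle along the unique tree path, costs are additive over arcs, so a route is optimal iff the chosen arc is optimal in each bundle separately, and the total demand crossing a bundle in each direction is strategy-independent because the underlying edge is a cut --- but the resulting local problem at a bundle is a two-terminal parallel-link network with user-specific costs and arbitrarily many links, and its uniqueness is not a one-line consequence of strict monotonicity: your swap argument covers only $k=2$; for general $k$ you need the parallel-network uniqueness theorem of \citet{Ko04} and \citet{Mi05}. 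This is precisely how the paper closes the argument: it writes $G$ as successive $1$-sums of two-vertex parallel-edge graphs, proves in Lemma~\ref{lem:1sum} that the $1$-sum preserves the strong uniqueness property (the restriction of an equilibrium to each summand is an equilibrium for suitably aggregated demands), and cites the parallel-edge graph as the base case. Your plan is repairable along these lines, but as written it neither covers bundles of multiplicity $\geq 3$ nor validly justifies per-bundle uniqueness beyond digons.
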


Alternatively, this theorem states that a graph has the strong uniqueness property if and only it is obtained by taking a forest (a graph without cycles) and by replacing some edges by parallel edges. 

Before proving this theorem, let us state a preliminary result allowing to extend the strong uniqueness property whenever one ``glues'' together two supply graphs on a vertex. This latter operation is called a {\em $1$-sum} in the usual terminology of graphs.
 
\begin{lemma}\label{lem:1sum}
The $1$-sum operation preserves the strong uniqueness property.
\end{lemma}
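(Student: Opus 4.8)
The plan is to prove that the $1$-sum operation preserves the strong uniqueness property by reducing any equilibrium on the glued graph to independent equilibria on the two pieces. Suppose $G$ is the $1$-sum of $G_1$ and $G_2$ along a common vertex $v$, and suppose both $G_1$ and $G_2$ have the strong uniqueness property. Fix an arbitrary demand digraph $H=(T,L)$ on $G$, an arbitrary partition of the users, and an arbitrary assignment of strictly increasing cost functions. I would start by classifying each OD-pair $\ell=(o,d)\in L$ according to where its endpoints lie: either both $o$ and $d$ sit in the same side $G_s$ (for $s\in\{1,2\}$), or they straddle the cut vertex, with $o$ in one side and $d$ in the other.

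The key structural observation is that the cut vertex $v$ is the unique point through which any route must pass if it crosses between the two sides. For a straddling pair with $o\in V(G_1)\setminus\{v\}$ and $d\in V(G_2)\setminus\{v\}$, every $(o,d)$-route in the directed version of $G$ decomposes uniquely as the concatenation of an $(o,v)$-path inside $G_1$ and a $(v,d)$-path inside $G_2$; moreover the arc sets of $G_1$ and $G_2$ are disjoint, so the cost of such a route splits additively into a $G_1$-part and a $G_2$-part that are independently optimizable. First I would use this to argue that, given any equilibrium $\sigma$ on $(G,H)$, we can read off an induced demand and induced user behavior on each side: on side $s$ we introduce, for each straddling pair, a new OD-pair terminating (or originating) at $v$, with the users routed in $\sigma$ determining their contribution to the flow on $G_s$. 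The crucial point is that the restriction of $\sigma$ to side $s$ is itself an equilibrium for the induced instance on $(G_s, H_s)$, because the minimality condition on a concatenated route is equivalent to the conjunction of minimality of each half-route (the halves share no arcs and hence no flow-coupling across the cut, the only shared vertex being $v$ which carries no cost).

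I would then invoke the strong uniqueness property of $G_1$ and $G_2$ applied to the induced instances. There is one subtlety I expect to be the main obstacle: the induced demand on side $s$ depends on how straddling users split their total demand between the two sides, and a priori a second equilibrium $\hat\sigma$ might induce a \emph{different} induced demand on $G_s$, so we cannot directly compare flows via uniqueness on a fixed instance. To handle this I would argue that the total throughput at $v$ for each straddling pair is forced. Concretely, consider the flow balance at $v$: since $v$ is a cut vertex and $H$ is simple, the net flow that each straddling pair must push through $v$ equals the total measure $\lambda(I_\ell)$ of that pair's users, and this quantity is independent of the equilibrium. Thus the induced instances on $G_s$ coincide for $\sigma$ and $\hat\sigma$ (same OD-pairs terminating at $v$, same user measures, same cost functions), and strong uniqueness on each $G_s$ then forces equality of flows arc by arc on both sides, hence on all of $G$. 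I would finish by noting that pairs lying entirely within one side are trivially handled by the same restriction argument, and that edges replaced by parallel edges cause no difficulty. The only genuinely delicate step is verifying that flow balance at the cut vertex pins down the induced demand uniformly across equilibria; once that is in place, the two applications of strong uniqueness on the pieces finish the proof.
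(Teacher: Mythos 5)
Your proposal is correct and takes essentially the same approach as the paper: both restrict an equilibrium on the glued graph to each side, replace straddling OD-pairs by induced pairs ending (or starting) at the cut vertex $v$, observe that the induced instance on each side is the same for every equilibrium, and then apply the uniqueness property of each piece. The one ``delicate step'' you flag resolves even more simply than you suggest---no user splits demand between sides, since every route of a straddling pair passes through $v$ in every strategy profile, so the induced demand is trivially the full measure $\lambda(I_\ell)$---and, as in the paper's construction (where $\tilde{I}_{(o,v)}$ is the union of all $I_{(o,w)}$), induced pairs sharing an endpoint at $v$ should be merged so the induced demand digraph remains simple.
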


\begin{proof}
Let $G=(V,E)$ and $G'=(V',E')$ be two graphs, and let $H=(T,L)$ and $H'=(T',L')$ two directed graphs with $T\subseteq V$ and $T'\subseteq V'$, such that $(G,H)$ and $(G',H')$ have the uniqueness property. Assume that $(G,H)$ and $(G',H')$ have a unique common vertex $v$, i.e.  $V\cap V'=T\cap T' =\{v\}$, and define $(G'',H'')$ as the $1$-sum of them: $G''=(V\cup V', E\cup E')$ and $H''=(T\cup T', L\cup L'\cup L'')$ with $L'':=\{(u,w):\,(u,v)\in L\mbox{ and }(v,w)\in L'\}$. 

Assume that we have an equilibrium on $(G'',H'')$ for some cost functions and some partition $(I_{(o,d)})_{(o,d)\in L''}$ of the population. The restriction of this equilibrium on $(G,H)$ is an equilibrium for $(G,H)$ with the same cost functions and with a partition of the population obtained as follows.

When $o$ and $d$ are both in $H$, we keep the same $I_{(o,d)}$. Moreover, we complete this collection of subsets. For each vertex $o$ of $H$, we define $\tilde{I}_{(o,v)}$ to be the union of all $I_{(o,w)}$ with $w$ a vertex of $H'$. For each vertex $d$ of $H$, we define $\tilde{I}_{(v,d)}$ to be the union of all $I_{(w,d)}$ with $w$ a vertex of $H'$. We get the partition of the population $I$ we are looking for. The restriction of the equilibrium on $(G,H)$ is an equilibrium since for each user, the restriction of a minimum cost route of $(G'',H'')$ is a minimum cost route of $(G,H)$.

The same property holds for $(G',H')$. Therefore, if we had two equilibria inducing two distinct flows on some arc $a$ of the directed version of $G''$, we would get equilibria inducing two distinct 
flows on the arc $a$, which is in the directed version of $G$ or $G'$. It is in contradiction with the assumption on $G$ and $G'$.
\end{proof} 

\begin{proof}[Proof of Theorem~\ref{thm:strong}]
Suppose that there is a cycle $C$ of length $3$ in $G$ with vertices $u$, $v$, and $w$. Define $H$ as the digraph with arcs $(u,v)$, $(u,w)$, and $(v,w)$. The mixed graph $C+H$ is then the top left one of Figure~\ref{fig:annexe}. Corollary~\ref{cor:minor} implies that $(G,H)$ does not have the uniqueness property, and thus that $G$ does not have the strong uniqueness property.

Conversely, suppose that there is no cycle of length $3$ or more. The graph $G$ can then be obtained by successive $1$-sums of a graph made of two vertices and parallel edges. Since a graph with two vertices and parallel edges has the uniqueness property for any demand digraph (see \citet{Ko04} or \citet{Mi05}), we can conlude with Lemma~\ref{lem:1sum} that $G$ has the strong uniqueness property.
\end{proof}

\subsection{When there are only two classes}\label{subsec:two_classes}

When exhibiting multiple equilibrium flows in the proof of Theorem~\ref{thm:main}, we need to define three classes. The same remark holds for the characterization of the two-terminal networks having the uniqueness property in the article by \citet{Mi05}: all cases of non-uniqueness are built with three classes. 
We may wonder whether there are also multiple equilibrium flows with only two classes of users. The answer is yes as shown by the following examples. 
The first example is in the framework of the ring network; according to Theorem~\ref{thm:main}, such an example requires at least three OD-pairs. Since it will contain exactly three OD-pairs, it is in a sense a minimum example for ring network.
The second example involves a two-terminal network  -- $K_4$, the complete graph on four vertices -- as in \citet{BFHH09}. They used it in order  to answer a question by \citet{CoCoSt09} about the uniqueness of equilibrium in atomic player routing games. However, their cost functions do not suit our framework and we design specific ones.

\subsubsection{Multiple equilibrium flows on the ring with only two classes}

Consider the graph on top on the left of Figure~\ref{fig:annexe}. Define the two classes $1$ and $2$, with the following population measures.

$$\begin{array}{c| c cc  }
 \hline\noalign{\smallskip}
\ell\in L & (u,w) & (u,v) & (w,v)  \\ 
    \noalign{\smallskip}\hline\noalign{\smallskip}
\lambda(I_{\ell}^1) & 0 & 1.5 & 0 \\ 
\lambda(I_{\ell}^2) & 1 & 0 & 1 \\
    \noalign{\smallskip}\hline
  \end{array}$$

Cost functions are:
$$\begin{array}{c|cccccc }
 \hline\noalign{\smallskip}
\text{Arc}& (u,w) & (w,v) & (v,u) &  (w,u) & (u,v) & (v,w) \\ 
    \noalign{\smallskip}\hline\noalign{\smallskip}
\mbox{Class }1 & x & x+48 &      &  & 24x+7 & \\ 
\mbox{Class }2 & 22 x & 22 x &    &  x & x+26 & x \\
    \noalign{\smallskip}\hline
 \end{array}$$

For a given class, arcs not used in any route lead to blanks in this table.

We define the strategy profile $\sigma$ (resp. $\hat\sigma$) such that all users of the class $1$ select a negative (resp. positive) route and all users of the class $2$ select a positive (resp. negative) route. We get the following (distinct) flows.

$$\begin{array}{c| cccccc }
 \hline\noalign{\smallskip}
\text{Arc $a$} & (u,w) & (w,v) & (v,u) &  (w,u) & (u,v) & (v,w) \\ 
    \noalign{\smallskip}\hline\noalign{\smallskip}
f_a & 1  & 1 & 0 & 0 & 1.5 & 0 \\
\hat f_a & 1.5 & 1.5 & 0 & 1 & 2 & 1 \\
    \noalign{\smallskip}\hline
  \end{array}$$
  
We check that $\sigma$ is an equilibrium. 

For users in $I_{(u,v)}^1$, the cost of the positive route is $50$ and of the negative $43$. For users in $I_{(u,w)}^2$ and in $I_{(w,v)}^2$, the cost of the positive route is $22$ and of the negative $27.5$. 
No user is incitated to change its route choice. \\

We check that $\hat\sigma$ is an equilibrium. 

For users in $I_{(u,v)}^1$, the cost of the positive route is $51$ and of the negative $55$. For users in $I_{(u,w)}^2$ and in $I_{(w,v)}^2$, the cost of the positive route is $33$ and of the negative $29$. 
No user is incitated to change its route choice.

\begin{remark}
  Actually, when we specialize the construction of Section~\ref{subsec:nonuniq} to the graph on top on the left of Figure~\ref{fig:annexe}, we can merge classes $2$ and $3$ in a unique class $2$ leading to the example above. More generally, using the symmetry of the cost functions for class $2$ and class $3$ users, we can merge the two classes for any graph such that $A_{\{1,2\}}^\eps \neq \emptyset$ and $A_{\{1,3\}}^\eps \neq \emptyset$, with $\eps\in \{-,+\}$ in order to get other ring examples with two classes and multiple equilibrium flows.
\end{remark}

\subsubsection{Multiple equilibrium flows for a two-terminal network with only two classes} 

Consider the two-terminal network $K_4$ of Figure~\ref{fig:K4}.
\begin{figure}[htbp]
 \begin{center} 
\includegraphics{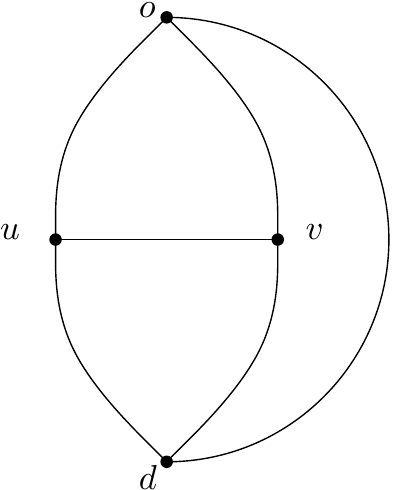}

\caption{A two-terminal network for which multiple equilibrium flows exist with only two classes}
\label{fig:K4}
 \end{center}
\end{figure}
 
 Suppose that we have two classes of users $I^1$ and $I^2$, with $\lambda(I^1)=3$ and $\lambda(I^2)=4$, with the following cost functions on each arc, where ``$\infty$'' means a prohibitively high cost function.
 $$\begin{array}{c|ccccccc}
 \hline\noalign{\smallskip}
    \text{Arc}  & (o,u) & (o,v) & (u,v) & (v,u) & (u,d) & (v,d) & (o,d) \\ 
    \noalign{\smallskip}\hline\noalign{\smallskip}
    \text{Class } 1 & x & \mbox{``$\infty$''} & x+18 & \mbox{``$\infty$''} & \mbox{``$\infty$''} & x & 7x \\
    \text{Class } 2 & 5x & x & \mbox{``$\infty$''} & \mbox{``$\infty$''} & x & 5x & x+10 \\
    \noalign{\smallskip}\hline
   \end{array}$$
Users of class $1$ have only the choice between the two routes $ouvd$ and $od$, while users of class $2$ can choose between the three routes $oud$, $ovd$, and $od$.

The strategy profile $\sigma$ is defined such that all class $1$ users select the route $ouvd$ and all class $2$ users select the route $od$.

The strategy profile $\hat\sigma$ is defined such that all class $1$ users select the route $od$, half of class $2$ users select the route $oud$, and the other half select the route $ovd$. We get the following (distinct) flows.

$$\begin{array}{c|ccccccc}
 \hline\noalign{\smallskip}
    \text{Arc $a$}  & (o,u) & (o,v) & (u,v) & (v,u) & (u,d) & (v,d) & (o,d) \\ 
        \noalign{\smallskip}\hline\noalign{\smallskip}
    f_a  & 3 & 0 & 3 & 0 & 0 & 3 & 4 \\
\hat f_a & 2 & 2 & 0 & 0 & 2 & 2 & 3 \\
    \noalign{\smallskip}\hline
\end{array}$$
   
We check that $\sigma$ is an equilibrium. 

For users of the class $1$, the cost of $ouvd$ is $27$, and the cost of $od$ is $28$. For users of the class $2$, the cost of $oud$ is $15$, the cost of $ovd$ is $15$, and the cost of $od$ is $14$. 
No user is incitated to change its route choice. \\

We check that $\hat\sigma$ is an equilibrium. 

For users of the class $1$, the cost of $ouvd$ is $22$, and the cost of $od$ is $21$. For users of the class $2$, the cost of $oud$ is $12$, the cost of $ovd$ is $12$, and the cost of $od$ is $13$. 
No user is incitated to change its route choice.

\bibliographystyle{plainnat} 
\bibliography{CongestionGames} 

\newpage
\section*{Appendix: Minimal ring graphs without the uniqueness property} \label{sec:annex}

 \begin{figure}[htbp]\begin{center}
\includegraphics[scale=0.85]{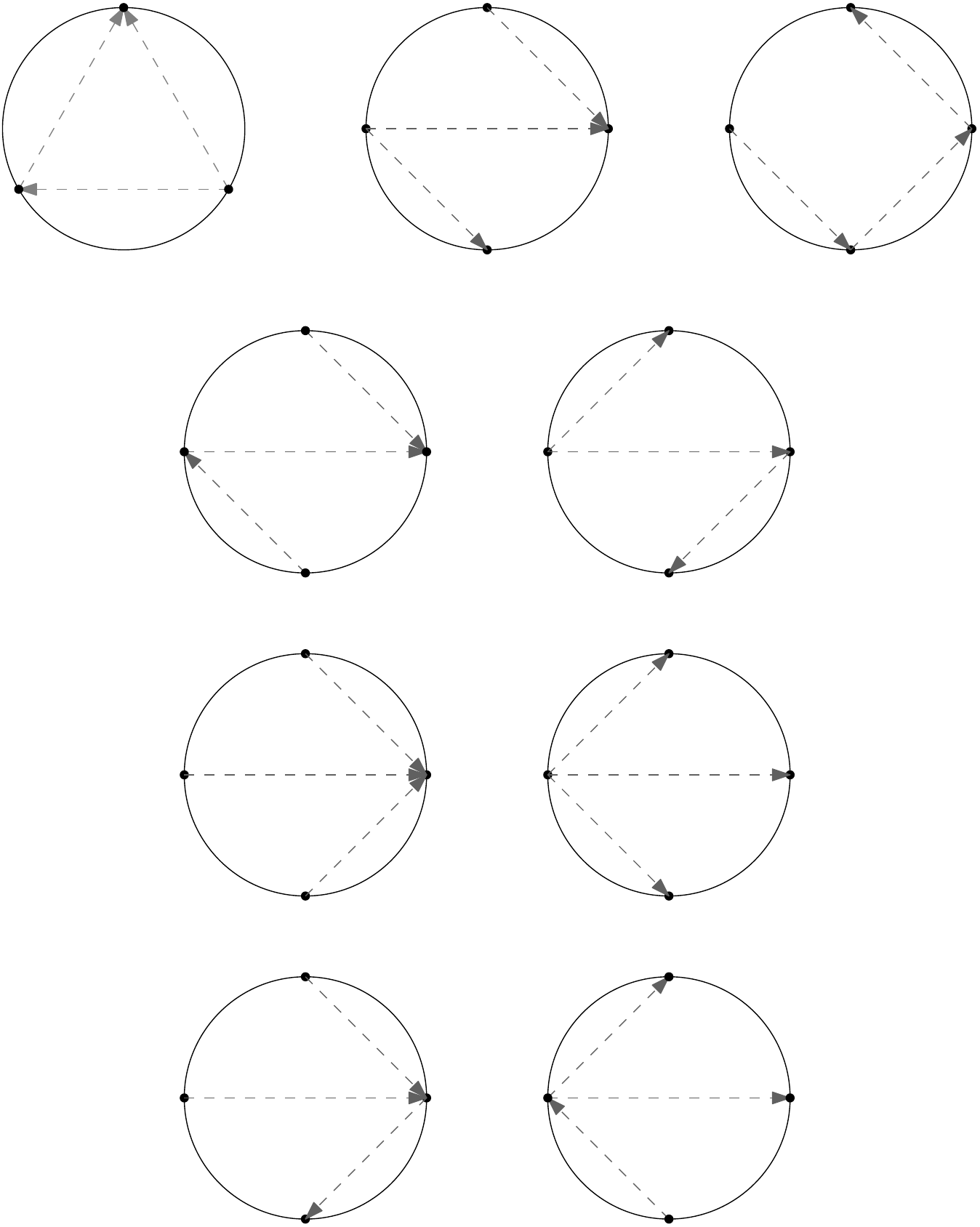}
\caption{Any ring without the uniqueness property has one of these graphs as a minor}
\label{fig:annexe}
 \end{center}\end{figure}

\end{document}